\newif \ifArxiv
\let\doendproof\endproof
\renewcommand\endproof{~\hfill$\qed$\doendproof}
\newenvironment{sketch}{\noindent{\itshape Sketch of proof.}}{~\hfill$\qed$\doendproof\smallskip}
\spnewtheorem{clm}{Claim}{\bfseries}{\rmfamily}
\newcommand{\myparagraph}[1]{\smallskip\noindent\textbf{\boldmath #1}}
\newcommand{\smooth}{\ensuremath{\mathrm{smooth}}}
\newcommand{\pre}{\textrm{prec}}
\newcommand{\suc}{\textrm{succ}}
\newcommand{\turn}{\textrm{turn}}
\newcommand{\rot}{\textrm{rot}}
\newcommand{\rect}{\overline}
\newcommand{\lupward}{upward\xspace}
\newcommand{\ldownward}{downward\xspace}
\newcommand{\lleftward}{leftward\xspace}
\newcommand{\lrightward}{rightward\xspace}
\begin{document}

\title{
On Turn-Regular Orthogonal Representations\thanks{Work partially supported by DFG grants Ka~812/17-1 and Ru 1903/3-1; by MIUR Project ``MODE'' under PRIN 20157EFM5C; by MIUR Project ``AHeAD'' under PRIN 20174LF3T8; and by Roma Tre University Azione 4 Project ``GeoView''. This work started at the Bertinoro Workshop on Graph Drawing BWGD 2019.
\ifArxiv
Appears in the Proceedings of the 28th International Symposium on Graph Drawing and Network Visualization (GD 2020)
\else 
\fi
}
}

\ifArxiv

\author{%
Michael~A.~Bekos\inst{1} \and 
Carla~Binucci\inst{2} \and 
Giuseppe~Di~Battista\inst{3} \and
Walter~Didimo\inst{2} \and 
Martin~Gronemann\inst{4} \and 
Karsten~Klein\inst{5} \and 
Maurizio~Patrignani\inst{3} \and 
Ignaz Rutter\inst{6}}

\else
\author{%
Michael~A.~Bekos\orcidID{0000-0002-3414-7444}\inst{1} \and 
Carla~Binucci\orcidID{0000-0002-5320-9110}\inst{2} \and 
Giuseppe~Di~Battista\orcidID{0000-0003-4224-1550}\inst{3} \and
Walter~Didimo\orcidID{0000-0002-4379-6059}\inst{2} \and 
Martin~Gronemann\orcidID{0000-0003-2565-090X}\inst{4} \and 
Karsten~Klein\orcidID{0000-0002-8345-5806}\inst{5} \and 
Maurizio~Patrignani\orcidID{0000-0001-9806-7411}\inst{3} \and 
Ignaz Rutter\orcidID{0000-0002-3794-4406}\inst{6}}
\fi

\authorrunning{M.~A.~Bekos et al.}

\date{}

\institute{%
Department of Computer Science, University of T{\"u}bingen, T{\"u}bingen, Germany
\\\email{bekos@informatik.uni-tuebingen.de}
\and
Department of Engineering, University of Perugia, Perugia, Italy\\
\email{carla.binucci@unipg.it},
\email{walter.didimo@unipg.it}
\and
Department of Engineering, Roma Tre University, Italy\\
\email{gdb@dia.uniroma3.it},  
\email{maurizio.patrignani@uniroma3.it}
\and
Theoretical Computer Science, Osnabr\"uck University, Osnabr\"uck, Germany
\\\email{martin.gronemann@uni-osnabrueck.de}
\and
Department of Computer and Information Science, University of Konstanz, Konstanz, Germany
\email{karsten.klein@uni-konstanz.de}
\and
Department of Computer Science and Mathematics, University of Passau, Germany
\email{rutter@fim.uni-passau.de}
}

\maketitle
%
\begin{abstract}
An interesting class of orthogonal representations consists of the so-called \emph{turn-regular} ones, i.e., those that do not contain any pair of reflex corners that ``point to each other'' inside a face. For such a representation $H$ it is possible to compute in linear time a minimum-area drawing, i.e., a drawing of minimum area over all possible assignments of vertex and bend coordinates of $H$. In contrast, finding a minimum-area drawing of $H$ is NP-hard if $H$ is non-turn-regular. This scenario naturally motivates the study of which graphs admit turn-regular orthogonal representations. In this paper we identify notable classes of biconnected planar graphs that always admit such  representations, which can be computed in linear time. We also describe a linear-time testing algorithm for trees and provide a polynomial-time algorithm that tests whether a biconnected plane graph with ``small'' faces has a turn-regular orthogonal representation without bends.  
%
\keywords{Orthogonal Drawings  \and Turn-regularity \and Compaction.}
\end{abstract}

\section{Introduction}\label{sec:intro}

Computing \emph{orthogonal drawings} of graphs is among the most studied problems in graph drawing~\cite{DBLP:books/ph/BattistaETT99,DBLP:reference/crc/DuncanG13,DBLP:conf/dagstuhl/1999dg,DBLP:books/ws/NishizekiR04}, because of its direct application to several domains, such as software engineering, information systems, and circuit design (e.g.,~\cite{DBLP:journals/jss/BatiniTT84,dl-gvdm-07,DBLP:journals/ivs/EiglspergerGKKJLKMS04,DBLP:books/sp/Juenger04,Lengauer-90}). In an orthogonal drawing, the vertices of the graph are mapped to distinct points of the plane and each edge is represented as~an alternating sequence of horizontal and vertical segments between its end-vertices. A point in which two segments of an edge meet is called~a~\emph{bend}. An orthogonal drawing is a \emph{grid} drawing if its vertices and bends have integer coordinates.

One of the most popular and effective strategies to compute a readable orthogonal grid drawing of a graph $G$ is the so-called \emph{topology-shape-metrics} (or \emph{TSM}, for short) approach~\cite{DBLP:journals/siamcomp/Tamassia87}, which consists of three 
steps: 
\begin{inparaenum}[(i)]
\item \label{s:1} compute a planar embedding of $G$ by possibly adding dummy vertices to replace edge crossings if $G$ is not planar; 
\item \label{s:2} obtain an \emph{orthogonal representation} $H$ of $G$ from the previously determined planar embedding; $H$ describes the ``shape'' of the final drawing in terms of angles around the vertices and sequences of left/right bends along the edges; 
\item \label{s:3} assign integer coordinates to vertices and bends of $H$ to obtain the final non-crossing orthogonal grid drawing~$\Gamma$~of~$G$.
\end{inparaenum}

If $G$ is planar, the TSM approach computes a planar  orthogonal grid drawing $\Gamma$ of $G$. Such a planar drawing exists if and only if $G$ is a \emph{4-graph}, i.e., of maximum vertex-degree at most four. To increase the readability of $\Gamma$, a typical optimization goal of Step~(\ref{s:2}) is the minimization of the number of bends. In Step~(\ref{s:3}) the goal is to minimize the area or the total edge length of $\Gamma$; a problem referred to as \emph{orthogonal compaction}. Unfortunately, while the computation of an embedding-preserving bend-minimum orthogonal representation $H$ of a plane 4-graph is polynomial-time solvable~\cite{DBLP:journals/jgaa/CornelsenK12,DBLP:journals/siamcomp/Tamassia87}, the orthogonal compaction problem for a planar orthogonal representation $H$ is NP-complete in the general case~\cite{DBLP:journals/comgeo/Patrignani01}. 
Nevertheless, Bridgeman et al.~\cite{DBLP:journals/comgeo/BridgemanBDLTV00} showed that the compaction problem for the area requirement can be solved optimally in linear time for a subclass of orthogonal representations called \emph{turn-regular}. A similar polynomial-time result for the minimization of the total edge length in this case is proved by Klau and Mutzel~\cite{DBLP:conf/ipco/KlauM99}. Esser showed that these two approaches are equivalent~\cite{10.1007/978-3-030-41590-7_8}. 

Informally speaking, a face of a planar orthogonal representation $H$ is turn-regular if it does not contain a pair of reflex corners (i.e., turns of $270^\circ$) that point to each other (see Section~\ref{sec:preliminaries} for the formal definition); $H$ is turn-regular if all its faces are turn-regular. For a turn-regular representation $H$, every pair of vertices or bends has a unique orthogonal relation (left/right or above/below) in any planar drawing of $H$. Conversely, different orthogonal relations are possible for a pair of opposing reflex corners, which makes it computationally hard to optimally compact non-turn-regular representations. For example, Figs.~\ref{fig:intro-1} and~\ref{fig:intro-2} show two different drawings of a non-turn-regular orthogonal representation; the drawing in Fig.~\ref{fig:intro-2} has minimum area. Fig.~\ref{fig:intro-3} depicts a minimum-area drawing of a turn-regular orthogonal representation of the same graph.

\begin{figure}[tb]
	\centering
	\subfigure[]{%
		\centering
		\includegraphics[page=1]{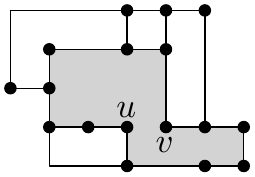}
		\label{fig:intro-1}
	}\hfil
	\subfigure[]{%
		\centering
		\includegraphics[page=2]{figures/intro.pdf}
		\label{fig:intro-2}
	}\hfil
	\subfigure[]{%
		\centering
		\includegraphics[page=3]{figures/intro.pdf}
		\label{fig:intro-3}
	}
	\caption{%
	(a) Drawing of a non-turn-regular orthogonal representation $H$; vertices $u$ and $v$ point to each other in the gray face. 
	(b) Another drawing of $H$ with smaller area. 
	(c) Drawing of a turn-regular orthogonal representation of the~same~graph.}
	\label{fi:intro}
\end{figure}

The aforementioned scenario naturally motivates the problem of computing orthogonal representations that are turn-regular, so to support their subsequent compaction. To the best of our knowledge, this problem has not been studied so far (a related problem is studied for upward planar drawings only~\cite{DBLP:journals/jgaa/BinucciGDR11,DBLP:conf/gd/BattistaL98,DBLP:journals/jgaa/Didimo06}). Heuristics have been described to make any given orthogonal representation $H$ turn-regular, by adding a minimal set of dummy edges~\cite{DBLP:journals/comgeo/BridgemanBDLTV00,DBLP:journals/amc/HashemiT06}; however, such edges impose constraints that may yield a drawing of sub-optimal area for~$H$. 

\smallskip\noindent Our contribution is as follows:

\smallskip\noindent $(i)$ We identify notable classes of planar graphs that always admit turn-regular orthogonal representations. We prove that biconnected planar 3-graphs and planar Hamiltonian 4-graphs (which include planar 4-connected 4-graphs~\cite{NC08}) admit turn-regular representations with at most two bends per edge and at most three bends per edge, respectively. For these graphs, a turn-regular representation can be constructed in linear time. We also prove that every biconnected planar graph admits an orthogonal representation that is \emph{internally} turn-regular, i.e., its internal faces are turn-regular (Section~\ref{sec:turn-regular-graphs}). We leave open the question whether every biconnected planar 4-graph admits a turn-regular representation. 

\smallskip\noindent $(ii)$ For 1-connected planar graphs, including trees, there exist infinitely many instances for which a turn-regular representation does not exist. Motivated by this scenario, and since the orthogonal compaction problem remains NP-hard even for orthogonal representations of 
paths~\cite{efkssw-mrpgas-20-arxiv},
we study and characterize the class of trees that admit turn-regular representations. We then describe a corresponding linear-time testing algorithm, which in the positive case computes a turn-regular drawing without bends (Section~\ref{sec:testing-turn-regularity}). Finally, we prove that such drawings are ``convex'' (i.e., all edges incident to leaves can be extended to infinite crossing-free rays). We remark that a linear-time algorithm to compute planar straight-line convex drawings of trees is described by Carlson and Eppstein~\cite{DBLP:conf/gd/CarlsonE06}. However, in general, the drawings they compute are not orthogonal.

\smallskip\noindent $(iii)$ We address the problem of testing whether a given biconnected plane graph admits a turn-regular \emph{rectilinear} representation, i.e., a representation without bends. For this problem we give a polynomial-time algorithm for plane graphs with ``small'' faces, namely faces of degree at most eight (Section~\ref{sec:rect}).

	

\section{Preliminary Definitions and Basic Results}\label{sec:preliminaries}

We consider connected graphs and assume familiarity with basic concepts of orthogonal graph drawing and planarity~\cite{DBLP:books/ph/BattistaETT99} 
\ifArxiv
(see Appendix~\ref{app:preliminaries}). 
\else
(see also~\cite{bbddgkpr-tror-20-arxiv}).
\fi
Let $G$ be a plane 4-graph and $H$ be an orthogonal representation of $G$. If $H$ has no bends, then it is called \emph{rectilinear}. W.l.o.g., we assume that $H$ comes with a given orientation, i.e., for each edge segment $\rect{pq}$ of $H$ (where $p$ and $q$ are vertices or bends), it is fixed if $p$ is to the left, to the right, above, or below $q$ in every (orthogonal) drawing of $H$. Let $f$ be a face of $H$. We assume that the boundary of $f$ is traversed counterclockwise (clockwise) if $f$ is internal (external). The \emph{rectilinear image} of $H$ is the orthogonal representation $\rect{H}$ obtained from $H$ by replacing each bend with a degree-2 vertex. For any face $f$ of $H$, let $\rect{f}$ denote the corresponding face of $\rect{H}$. For each occurrence of a vertex $v$ of $\rect{H}$ on the boundary of $\rect{f}$, let $\pre(v)$ and $\suc(v)$ be the edges preceding and following $v$, respectively, on the boundary of $\rect{f}$ ($\pre(v)=\suc(v)$ if $\deg(v)=1$). Let $\alpha$ be the value of the angle internal to $\rect{f}$ between $\pre(v)$ and $\suc(v)$. We associate with $v$ one or two \emph{corners} based on the following cases: If $\alpha=90^\circ$, associate with $v$ one \emph{convex} corner; if $\alpha=180^\circ$, associate with $v$ one \emph{flat} corner; if $\alpha=270^\circ$, associate with $v$ one \emph{reflex} corner; if $\alpha=360^\circ$, associate with $v$ an ordered pair of \emph{reflex} corners. For example, in the (internal) face of Fig.~\ref{fig:turn}, a convex corner is associated with $v_1$, a flat corner with $v_2$, a reflex corner with $v_3$, and an ordered pair of reflex corners with $v_4$. 

\begin{figure}[tb]
	\centering
	\subfigure[]{%
		\centering
		\includegraphics{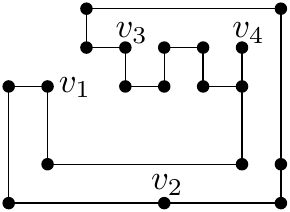}\label{fig:turn}
	}\hfil
	\subfigure[]{%
		\centering
		\includegraphics{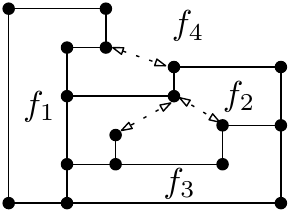}\label{fig:kitty}
	}
	\caption{Illustration of 
	(a)~convex, flat and reflex corners, and
	(b)~kitty corners.}
	\label{fig:forbidden-trees}
\end{figure}

Based on the definition above, there is a circular sequence of corners associated with (the boundary) of $\rect{f}$. For a corner $c$ of $\rect{f}$, we define: $\turn(c)= 1$ if $c$ is convex; $\turn(c)= 0$ if $c$ is flat; $\turn(c)= -1$ if $c$ is reflex. For any ordered pair $(c_i, c_j)$ of corners of $\rect{f}$, we define the following function: $\rot(c_i,c_j)=\sum_c \turn(c)$ for all corners $c$ along the boundary of $\rect{f}$ from $c_i$ (included) to $c_j$ (excluded). For example, in Fig.~\ref{fig:turn} let $c_1$, $c_2$, and $c_3$ be the corners associated with $v_1$, $v_2$, and $v_3$, respectively, and let $(c_4,c'_4)$ be the ordered pair of reflex corners associated with~$v_4$. We have $\rot(c_1,c_2)=3$, $\rot(c_3,c_4)=1$, $\rot(c_3,c'_4)=0$, and $\rot(c_3,c_1)=-3$. The properties below are consequences of results in~\cite{DBLP:journals/siamcomp/Tamassia87,DBLP:journals/siamcomp/VijayanW85}.
%
%
\begin{property}\label{pr:rot-1}
For each face $\rect{f}$ of $\rect{H}$ and for each corner $c_i$ of $\rect{f}$, we have $\rot(c_i,c_i)=4$ if $\rect{f}$ is internal and $\rot(c_i,c_i)=-4$ if $\rect{f}$ is external.
\end{property}	 
\begin{property}\label{pr:rot-2}
For each ordered triplet of corners $(c_i, c_j, c_k)$ of a face of $\rect{H}$, we have $\rot(c_i,c_k)=\rot(c_i,c_j) + \rot(c_j,c_k)$.  
\end{property}
\begin{property}\label{pr:rot-3}
Let $c_i$ and $c_j$ be two corners of $\rect{f}$. If $\rect{f}$ is internal then $\rot(c_i,c_j)=2$ $\iff$ $\rot(c_j,c_i)=2$. If $\rect{f}$ is external then $\rot(c_i,c_j)=2$ $\iff$ $\rot(c_j,c_i)=-6$.
\end{property}
%

Let $c$ be a reflex corner of $\rect{H}$ associated with either a degree-2 vertex or a bend of $H$. Let $s_h$ and $s_v$ be the horizontal and vertical segments incident to $c$ and let $\ell_h$ and $\ell_v$ be the lines containing $s_h$ and $s_v$, respectively.  We say that $c$ (or equivalently its associated vertex/bend of $H$) points \emph{up-left}, if $s_h$ is to the right of $\ell_v$ and $s_v$ is below $\ell_h$. The definitions of $c$ that points \emph{up-right}, \emph{down-left}, and \emph{down-right} are symmetric (see Figs.~\ref{fig:degree-2-corners-up-left}-\ref{fig:degree-2-corners-down-right}). If $v$ is a degree-1 vertex in $H$, then it has two associated reflex corners in $\rect{H}$. In this case, $v$ points \emph{\lupward} (\emph{\ldownward}) if its incident segment is  vertical and below (above) the horizontal line passing through $v$. The definitions of a degree-1 vertex that points \emph{\lleftward} or \emph{\lrightward} are symmetric (see Figs.~\ref{fig:degree-1-corners-up-left-right}-\ref{fig:degree-1-corners-up-down-right}). 

\begin{figure}[tb]
	\centering
	\subfigure[]{%
		\includegraphics[page=1]{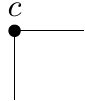}
		\label{fig:degree-2-corners-up-left}
	}\hfill
	\subfigure[]{%
		\includegraphics[page=2]{figures/degree-2-corners.pdf}
		\label{fig:degree-2-corners-up-right}
	}\hfill
	\subfigure[]{%
		\includegraphics[page=3]{figures/degree-2-corners.pdf}
		\label{fig:degree-2-corners-down-left}
	}\hfill
	\subfigure[]{%
		\includegraphics[page=4]{figures/degree-2-corners.pdf}
		\label{fig:degree-2-corners-down-right}
	}\hfill
	\subfigure[]{%
		\includegraphics[page=1]{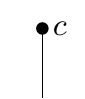}
		\label{fig:degree-1-corners-up-left-right}
	}\hfill
	\subfigure[]{%
		\includegraphics[page=2]{figures/degree-1-corners.pdf}
		\label{fig:degree-1-corners-down-left-right}
	}\hfill
	\subfigure[]{%
		\includegraphics[page=3]{figures/degree-1-corners.pdf}
		\label{fig:degree-1-corners-up-down-left}
	}\hfill
	\subfigure[]{%
		\includegraphics[page=4]{figures/degree-1-corners.pdf}
		\label{fig:degree-1-corners-up-down-right}
	}
	\caption{%
	Directions of a reflex corner associated with a degree-2 vertex or with a bend: 
	(a) up-left; (b) up-right; (c) down-left; (d) down-right. 
	Directions of a degree-1 vertex: 
	(e) \lupward; (f) \ldownward; (g) \lleftward; (h) \lrightward.}
	\label{fig:coner-directions}
\end{figure}
	 
Two reflex corners $c_i$ and $c_j$ of a face of $\rect{H}$ are called \emph{kitty corners} if $\rot(c_i,c_j)=2$ or $\rot(c_j,c_i)=2$. A face $f$ of an orthogonal representation $H$ is \emph{turn-regular}, if the corresponding face $\rect{f}$ of $\rect{H}$ has no kitty corners. If every face of $H$ is turn-regular, then $H$ is \emph{turn-regular}. For example, the orthogonal representation in Fig.~\ref{fig:kitty} is not turn-regular as the faces $f_1$ and $f_3$ are turn-regular, while the internal face $f_2$ and the external face $f_4$ are not turn-regular (the pairs of kitty corners in each face are highlighted with dotted arrows). A graph $G$ is \emph{turn-regular}, if it admits a turn-regular orthogonal representation. If $G$ admits a turn-regular rectilinear representation, then $G$ is \emph{rectilinear turn-regular}. The next lemma 
\ifArxiv
(whose proof can be found in Appendix~\ref{app:preliminaries}), 
\else
(whose proof can be found in~\cite{bbddgkpr-tror-20-arxiv}), 
\fi
provides a sufficient condition for the existence of a kitty-corner pair in the external face.

\begin{restatable}{lemma}{kittycorners}\label{lem:outer-face-kitty-corners}
  Let $\overline H$ be the rectilinear image of an orthogonal
  representation $H$ of a plane graph $G$.  Let $(c_1,c_2)$ be two corners of the external face such that
  $\rot(c_1,c_2) \ge 3$ or $c_1$ is a reflex corner
  and~$\rot(c_1,c_2) \ge 2$.  Then, the external face contains a pair of kitty corners.
\end{restatable}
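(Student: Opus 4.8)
The plan is to translate the statement into a one-dimensional ``turn walk'' along the boundary of the external face, so that a kitty-corner pair becomes a pair of down-steps of this walk. Fix a traversal $c_0,c_1,\dots,c_{n-1}$ of the corners of the external face, starting at $c_0:=c_1$, set $\tau_k:=\turn(c_k)$, and define the partial sums $S_j:=\sum_{k=0}^{j-1}\tau_k$ for $0\le j\le n$. Then $S_0=0$, Property~\ref{pr:rot-1} gives $S_n=\rot(c_0,c_0)=-4$, Property~\ref{pr:rot-2} gives $\rot(c_i,c_j)=S_j-S_i$ for $i<j$, and $\rot(c_j,c_i)=-4-(S_j-S_i)$. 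In this language a reflex corner is exactly a down-step ($\tau_k=-1$), its ``pre-level'' is $S_k$, and a pair of reflex corners $c_i,c_j$ with $i<j$ is a kitty pair as soon as $S_j-S_i=2$ (forward rotation $2$) or $S_j-S_i=-6$ (then the backward rotation is $2$, matching Property~\ref{pr:rot-3}).

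The engine of the proof is a descent lemma. Let $M:=\max_{0\le j\le n}S_j$ and let $t$ be the largest index attaining it. Since the walk takes unit steps, begins its final stretch at height $S_t=M$, and ends at $S_n=-4$, it must cross every intermediate level on the way down: for each integer $\ell$ with $-3\le\ell\le M$ there is an index $r\in[t,n-1]$ with $S_r=\ell$ and $S_{r+1}=\ell-1$, i.e.\ a reflex corner of pre-level exactly $\ell$ located at or after $t$. I would prove this by taking $r$ to be the last index with $S_r\ge\ell$ and observing that a unit down-step out of level $\ell$ is forced. The corner produced for $\ell=M$ is $c_t$ itself, while for $\ell<M$ the produced corner lies strictly after $t$.

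With this in hand the two cases are short. If $\rot(c_1,c_2)\ge 3$, then writing $c_2=c_m$ we get $M\ge S_m=\rot(c_1,c_2)\ge 3$, so both levels $M$ and $M-6$ lie in $[-3,M]$; the descent lemma yields reflex corners $c_t$ (pre-level $M$) and $b$ (pre-level $M-6$, strictly after $t$), whence $\rot(c_t,b)=-6$ and therefore $\rot(b,c_t)=2$, a kitty pair. If instead $c_1$ is reflex and $\rot(c_1,c_2)\ge 2$, then $M\ge 2$; when $M\ge 3$ the previous case applies, and when $M=2$ I would pair the hypothesised reflex corner $c_0=c_1$ (pre-level $S_0=0$) with the peak corner $c_t$ (pre-level $M=2$, and $t\ge 1$ since $S_0\ne M$), so that $\rot(c_0,c_t)=S_t-S_0=2$ gives a kitty pair directly.

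The one genuinely delicate point is the book-keeping inside the descent lemma: ensuring the two extracted reflex corners are distinct, correctly ordered along the traversal, and that the pre-level-$M$ corner is forced to be $c_t$ (rather than some earlier peak) so that the $M$/$M-6$ pair is ordered as the wrap-around rotation requires. Everything else is a mechanical consequence of the unit-step structure together with Properties~\ref{pr:rot-1}--\ref{pr:rot-3}; in particular, the inequality $M\ge 3$ is exactly what guarantees $M-6\ge -3$, so that a level-$(M-6)$ reflex corner is available, which is precisely why the threshold ``$\ge 3$'' (resp.\ ``$\ge 2$'' with a reflex $c_1$) appears in the hypothesis.
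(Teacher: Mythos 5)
Your proof is correct, but it takes a genuinely different route from the paper's. The paper argues locally by a two-stage path extension: it first walks \emph{backward} from $c_1$, observing that each preceding non-reflex corner keeps the rotation at least $3$ and a preceding reflex corner keeps it at least $2$, until the left endpoint is itself a reflex corner with rotation at least $2$ to $c_2$; it then walks \emph{forward} from $c_2$, using that the rotation only drops (by exactly one) at reflex corners while the full loop equals $-4$, until it lands on a reflex corner at rotation exactly $2$, so the kitty pair it produces has forward rotation $2$ along an extension of the given path. You instead globalize the problem into a lattice-path formulation, isolate one descent lemma (every level in $[-3,M]$ is crossed by a down-step at or after the last global maximum $t$), and certify the pair through the wrap-around identity $\rot(c_t,b)+\rot(b,c_t)=\rot(c_t,c_t)=-4$, pairing the peak $c_t$ with a reflex corner $b$ six levels lower so that $\rot(b,c_t)=2$; the residual case ($c_1$ reflex, $M=2$) pairs $c_1$ with the peak directly. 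Both proofs rest on the same discrete intermediate-value mechanism, but the decompositions differ, and each buys something: the paper's version is shorter and needs no global setup, while yours concentrates all bookkeeping in one reusable lemma, makes the distinctness and ordering checks mechanical (for $\ell=M$ the extracted index is forced to equal $t$; for $\ell<M$ it is forced to lie strictly after $t$), and explains transparently why the thresholds $3$, respectively $2$ with reflex $c_1$, are exactly what is needed, namely $M-6\ge-3$ or a direct level-$2$ pairing. One cosmetic correction: the walk does not take unit steps (flat corners contribute $0$), but your argument only uses that down-steps have size one, so nothing breaks.
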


\begin{corollary}\label{cor:outer-face-three-convex}
   Let $H$ be an orthogonal representation of a plane graph $G$. If the external face of~$\overline H$ has three consecutive convex corners,
   $H$ is not turn-regular.
\end{corollary}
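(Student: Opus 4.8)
The plan is to reduce the statement directly to Lemma~\ref{lem:outer-face-kitty-corners} by exhibiting, on the boundary of the external face of $\rect H$, an ordered pair of corners whose rotation is at least $3$. Let $c_1$, $c_2$, $c_3$ be the three consecutive convex corners guaranteed by the hypothesis, listed in the order in which they are encountered while traversing the boundary of the external face of $\rect H$, and let $c_4$ be the corner immediately following $c_3$ in this traversal.

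First I would check that $c_4$ is well defined and distinct from $c_1$, so that $(c_1,c_4)$ is a genuine ordered pair of distinct corners. By Property~\ref{pr:rot-1} the total rotation around the external face equals $\rot(c_1,c_1)=-4$. The three convex corners already contribute $+3$ to this sum, and every corner contributes at most $+1$ to the rotation (convex corners contribute $+1$, flat $0$, reflex $-1$); hence the external face must have strictly more than three corners in order to reach the total $-4$. In particular $c_4$ exists and $c_4\ne c_1$.

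Next, using the definition of $\rot$ as the sum of the turns of the corners traversed from the first corner (included) to the second (excluded), I would compute $\rot(c_1,c_4)=\turn(c_1)+\turn(c_2)+\turn(c_3)=1+1+1=3$, since $c_1,c_2,c_3$ are consecutive convex corners (each of turn $+1$) and $c_4$ is the immediate successor of $c_3$. Thus $\rot(c_1,c_4)\ge 3$, and Lemma~\ref{lem:outer-face-kitty-corners} applies with the pair $(c_1,c_4)$, yielding a pair of kitty corners in the external face. Since a face of $\rect H$ then contains kitty corners, $H$ is not turn-regular by definition.

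The only point requiring any care is the well-definedness of $c_4$ (that the external face is not reduced to the three convex corners alone); everything else is a one-line application of the rotation identities. I do not expect a real obstacle here, because the rotation balance $-4$ of the external face forces the presence of enough reflex corners to guarantee a successor of $c_3$ different from $c_1$.
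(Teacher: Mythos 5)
Your proof is correct and follows the same route the paper intends: the corollary is stated as an immediate consequence of Lemma~\ref{lem:outer-face-kitty-corners}, obtained exactly as you do by taking $c_4$ to be the successor of the third convex corner so that $\rot(c_1,c_4)=3$. Your additional check that $c_4$ exists and differs from $c_1$ (via the $-4$ rotation balance of Property~\ref{pr:rot-1}) is a sound piece of diligence that the paper leaves implicit.
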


\section{Turn-Regular Graphs}\label{sec:turn-regular-graphs}

The theorems in this section can be proven by modifying a well-known linear-time algorithm by Biedl and Kant~\cite{DBLP:journals/comgeo/BiedlK98} that produces an orthogonal drawing $\Gamma$ with at most two bends per edge of a biconnected planar 4-graph~$G$ with a fixed embedding~$\mathcal{E}$. Such an algorithm exploits an $st$-ordering $s = v_1, v_2, \dots, v_n = t$ of the vertices of $G$, where $s$ and $t$ are two distinct vertices on the external face of $\mathcal{E}$. We recall that an $st$-ordering $s = v_1, v_2, \dots, v_n = t$ is a linear ordering of the vertices of $G$ such that any vertex $v_i$ distinct from $s$ and $t$ has at least two neighbors $v_j$ and $v_k$ in $G$ with $j < i < k$~\cite{DBLP:journals/tcs/EvenT77}.
The orthogonal drawing $\Gamma$ is constructed incrementally by adding vertex $v_k$, for $k = 1, \dots, n$, into the drawing $\Gamma_{k-1}$ of $\{v_1, \dots, v_{k-1}\}$, while preserving the embedding  $\mathcal{E}$. Some invariants are maintained when vertex $v_k$ is placed above $\Gamma_{k-1}$: (i) vertex $v_k$ is attached to $\Gamma_{k-1}$ with at least one edge incident to $v_k$ from the bottom; (ii) after $v_k$ is added to $\Gamma_{k-1}$, some extra columns are introduced into $\Gamma_{k}$ to ensure that each edge $(v_i,v_j)$, such that $i \leq k < j$ has a dedicated column in $\Gamma_{k}$ that is reachable from $v_i$ with at most one bend and without introducing crossings.  

\begin{figure}[tb]
  \centering
  \subfigure[]{%
    \centering
    \includegraphics[page=1]{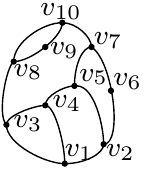}
    \label{fig:example-biconn-deg3}
  }\hfil
  \subfigure[]{%
    \centering
    \includegraphics[page=2]{figures/biconn-deg3.pdf}
    \label{fig:example-biconn-deg3-v1}
  }\hfil
  \subfigure[]{%
    \centering
    \includegraphics[page=3]{figures/biconn-deg3.pdf}
    \label{fig:example-biconn-deg3-v2}
  }\hfil
  \subfigure[]{%
    \centering
    \includegraphics[page=4]{figures/biconn-deg3.pdf}
    \label{fig:example-biconn-deg3-v3}
  }\hfil
  \subfigure[]{%
    \centering
    \includegraphics[page=5,width=0.24\textwidth]{figures/biconn-deg3.pdf}
    \label{fig:example-biconn-deg3-v4}
  }
  \caption{The first four steps of the algorithm in the proof of Theorem~\ref{th:biconn-deg3} for the construction of a turn-regular orthogonal drawing of the biconnected planar 3-graph 
  \ifArxiv
  shown in~(a) (the following steps are illustrated in Fig.~\ref{fig:biconn-deg3-whole} in Appendix~\ref{app:turn-regular}).
  \else
  shown in~(a).
  \fi}
  \label{fig:biconn-deg3}
\end{figure}

\begin{theorem}\label{th:biconn-deg3}
Every biconnected planar 3-graph admits a turn-regular representation with at most two bends per edge, which can be computed in linear time.
\end{theorem}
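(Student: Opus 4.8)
The plan is to take the incremental $st$-ordering construction of Biedl and Kant~\cite{DBLP:journals/comgeo/BiedlK98} and (i) restrict it to biconnected $3$-graphs, where it becomes combinatorially much simpler, and (ii) add a rule for \emph{where} bends are placed, so that the resulting representation is turn-regular. First I would fix an $st$-ordering $s=v_1,\dots,v_n=t$ and recall that, since $G$ has maximum degree three, each internal vertex $v_k$ has either one down-edge and two up-edges or two down-edges and one up-edge, and that at a degree-$3$ vertex the three incident angles are forced to be $\{90^\circ,90^\circ,180^\circ\}$. Hence no reflex corner of $\rect H$ is located at a degree-$3$ vertex: every reflex corner sits at a bend or at a degree-$2$ vertex, which is exactly what the bend-placement rule must control.

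The heart of the argument is an invariant on corner \emph{directions} (Figs.~\ref{fig:degree-2-corners-up-left}--\ref{fig:degree-2-corners-down-right}). The key fact I would use is that two reflex corners can be kitty corners only if they point in diagonally opposite directions, i.e., one up-left and one down-right, or one up-right and one down-left; thus a face is turn-regular as soon as its reflex corners avoid such an antipodal pair. A short computation shows that a single horizontal ``step'' of an upward-drawn edge produces one upward-pointing reflex corner on one of its two sides and one downward-pointing reflex corner on the other side. I would therefore design the routing so that, at every stage $\Gamma_k$, (a) every already completed internal face is turn-regular, and (b) the reflex corners lying on the current upper boundary -- against which the next vertices will be attached -- all point upward. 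Property (b) is what guarantees that opening the up-edges of $v_{k+1}$ adds only upward-pointing reflex corners to the faces that are still being built.

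Given this set-up, the proof would proceed by induction on $k$. In the inductive step, inserting $v_k$ from below closes the faces ``under'' $v_k$ and opens new ones above it; I would verify that each closed face has all of its reflex corners pointing upward (so, by the antipodal criterion above, it is turn-regular) and that the upper-boundary invariant~(b) is re-established for $\Gamma_k$. The bound of at most two bends per edge is inherited from the Biedl--Kant construction (degree three even leaves slack), and the overall running time is linear because an $st$-ordering and the whole incremental drawing are computed in linear time. Finally, the external face must be handled separately: since $s$ is the unique lowest and $t$ the unique highest vertex, the outer boundary is ``$C$-shaped,'' and I would invoke Corollary~\ref{cor:outer-face-three-convex} and Lemma~\ref{lem:outer-face-kitty-corners}, together with the rotation identities of Properties~\ref{pr:rot-1}--\ref{pr:rot-3}, to exclude a kitty pair on the outer face.

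I expect the main obstacle to be exactly the inductive maintenance of invariant~(b) under the bend-placement rule, and in particular the case analysis when $v_k$ has two down-edges or two up-edges: one must check that the bends forced by the fixed embedding can always be oriented so that no internal face ever simultaneously acquires an up-left and a down-right corner (or an up-right and a down-left corner) at rotation exactly $2$. Controlling the downward-pointing reflex corners created by horizontal steps -- ensuring they are always separated from the matching upward corners by enough turning, or absorbed on the external boundary -- is the delicate part of the argument; the degree-$2$ vertices on the outer face, which can themselves be reflex, require a similar dedicated check.
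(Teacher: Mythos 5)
Your high-level framework is the same as the paper's: modify the Biedl--Kant incremental construction along an $st$-ordering, observe that kitty corners must point in diagonally opposite directions, and maintain a directional invariant on reflex corners through the induction. The gap is that your invariant~(b) is the wrong invariant, and it is not maintainable. Whenever an internal vertex $v_k$ has two up-edges (unavoidable in a general biconnected planar 3-graph), only one of them can use the column through $v_k$; the other must make a horizontal step at $v_k$'s row before turning upward into its dedicated column, and the bend of that step places a \emph{downward}-pointing reflex corner on the boundary of a face that is still open --- that is, exactly on the ``upper boundary'' which invariant~(b) requires to carry only upward-pointing corners. You notice this phenomenon yourself (``one downward-pointing reflex corner on the other side'') and defer its resolution as ``the delicate part,'' but it is not a residual technicality: it kills the inductive step. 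With your invariant, a face closed later can simultaneously inherit an up-left corner from the old boundary and a down-right corner from such a horizontal step --- an antipodal pair --- and nothing in your set-up excludes rotation value~$2$ between them.

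The paper's key idea, which is missing from your proposal, is to control the \emph{horizontal} rather than the vertical component of the corner directions: its invariant is that every reflex corner created so far points down-right or up-right (never up-left or down-left), except possibly on the external face on edges incident to $s$ or $t$. Unlike yours, this invariant is maintainable, because the construction can bias all bends to the right: a vertex $v_k$ with two down-edges is placed on the \emph{leftmost} of their two columns (so the other down-edge turns into $v_k$ producing an up-right corner), and the extra column for a second up-edge is inserted immediately to the \emph{right} of $v_k$ (so the horizontal step produces a down-right corner). Since down-right and up-right are not antipodal, no internal face can ever acquire a kitty pair, with no bookkeeping of rotation values required. The external face is then handled by noting that the exceptional left-pointing corners near $s$ and $t$ lie on the bottom or top row of every drawing realizing the representation (by $y$-monotonicity of the paths from $s$), so they cannot participate in a kitty pair; this is sharper than your appeal to Corollary~\ref{cor:outer-face-three-convex} and Lemma~\ref{lem:outer-face-kitty-corners}. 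Your two-bends bound and the linear running time are fine, but without replacing invariant~(b) by such a one-sided horizontal invariant, the core of your induction does not go through.
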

\begin{proof}
%
Let $G$ be a biconnected planar 3-graph and let $\mathcal{E}$ be any planar embedding of $G$. Let $s$ and $t$ be two distinct vertices on the external face of $\mathcal{E}$. As in~\cite{DBLP:journals/comgeo/BiedlK98}, based on an $st$-ordering of $G$, we incrementally construct an orthogonal drawing $\Gamma$ of $G$ by adding $v_k$ into the drawing $\Gamma_{k-1}$ of graph $G_{k-1}$ , for $k = 1, \dots, n$ . Besides the invariants (i) and (ii) described above, 
we additionally maintain the invariant (iii): each reflex corner introduced in the drawing points either down-right or up-right with the possible exception of the reflex corners of the edges on the external face that are incident to~$s$ or~$t$.
Drawing $\Gamma_1$ consists of the single vertex $v_1$. Since $\deg(v_1) \leq 3$, the columns assigned to its three incident edges are the column where $v_1$ lies and the two columns immediately on its left and on its right (see Fig.~\ref{fig:example-biconn-deg3-v1}). These columns are assigned to the edges incident to $v_1$ in the order they appear in $\mathcal{E}$. Now, suppose you have to add vertex $v_k$ to $\Gamma_{k-1}$. Observe that, since $G$ has maximum degree three, $v_k$ has a maximum of three edges $(v_k,v_h)$, $(v_k,v_i)$, and $(v_k,v_j)$, where we may assume, without loss of generality, that $h < i < j$. To complete the proof, we consider three cases:

\smallskip\noindent\emph{Case 1} ($h < k < i$): We place $v_k$ on the first empty row above $\Gamma_{k-1}$ and on the column assigned to $(v_k, v_h)$. Also, to preserve the invariant (ii), we introduce an extra column immediately to the right of $v_k$ and we assign the column of $v_k$ and the newly added extra column to $(v_k,v_i)$ and $(v_k,v_j)$ in the order that is given by $\mathcal{E}$. For example, Fig.~\ref{fig:example-biconn-deg3-v2} shows the placement of $v_2$ directly above $v_1$, with one extra column inserted to the right of $v_2$ and assigned to the edge $(v_2,v_6)$. 

\smallskip\noindent\emph{Case 2} ($i < k < j$): We place $v_k$ on the first empty row above $\Gamma_{k-1}$ and on the leftmost column between the columns assigned to $(v_k, v_h)$ and $(v_k, v_i)$. Also, we assign the column of $v_k$ to $(v_k,v_j)$, e.g., Fig.~\ref{fig:example-biconn-deg3-v4} shows the placement of $v_4$ on the leftmost column assigned to its incoming edges $(v_3,v_4)$ and $(v_1,v_4)$. 

\smallskip\noindent\emph{Case 3} ($j < k$): Here, $v_k$ is $t$. We place $v_k$ on the first empty row above $\Gamma_{k-1}$ and on the middle column among those assigned to $(v_k, v_h)$, $(v_k, v_i)$, and $(v_k, v_j)$.
     
\smallskip The discussion in~\cite{DBLP:journals/comgeo/BiedlK98} suffices to prove that $\Gamma$ is a planar orthogonal drawing of $G$ with at most two bends per edge. We claim that $\Gamma$ is also turn-regular. In fact, the invariant (iii) guarantees that all internal faces have reflex corners pointing either down-right or up-right and, hence, are turn-regular. On the external face we may have reflex corners pointing down-left (from the leftmost edge of $v_1$) or up-left (from the leftmost edge of $v_n$). However, since there is a $y$-monotonic path leading from $s$ to any other vertex of $G$, such corners correspond to bends lying on the bottom or on the top row of any drawing with the same orthogonal representation as $\Gamma$ and, therefore, they cannot form a kitty corner.      
\end{proof}

The proofs of the next two theorems exploit a similar technique as in Theorem~\ref{th:biconn-deg3}. 
The full proof of Theorem~\ref{th:hamiltonian-deg4} can be found in  
\ifArxiv
Appendix~\ref{app:turn-regular}. 
\else 
\cite{bbddgkpr-tror-20-arxiv}.
\fi 

\begin{figure}[b]
  \centering
  \subfigure[]{%
    \centering
    \includegraphics[page=1]{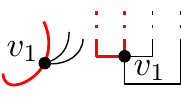}
    \label{fig:ham-first-1}
  }\hfil
  \subfigure[]{%
    \centering
    \includegraphics[page=2]{figures/ham-first.pdf}
    \label{fig:ham-first-2}
  }\hfil
  \subfigure[]{%
    \centering
    \includegraphics[page=3]{figures/ham-first.pdf}
    \label{fig:ham-first-3}
  }
  \subfigure[]{%
    \centering
    \includegraphics[page=1]{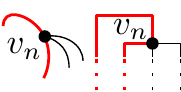}
    \label{fig:ham-last-1}
  }\hfil
  \subfigure[]{%
    \centering
    \includegraphics[page=2]{figures/ham-last.pdf}
    \label{fig:ham-last-2}
  }\hfil
  \subfigure[]{%
    \centering
    \includegraphics[page=3]{figures/ham-last.pdf}
    \label{fig:ham-last-3}
  }
  
  \subfigure[]{%
    \centering
    \includegraphics[page=1]{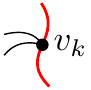}
    \label{fig:ham-unbalanced-1}
  }\hspace{0.05cm}
  \subfigure[]{%
    \centering
    \includegraphics[page=2]{figures/ham-unbalanced.pdf}
    \label{fig:ham-unbalanced-2}
  }\hspace{0.05cm}
  \subfigure[]{%
    \centering
    \includegraphics[page=3]{figures/ham-unbalanced.pdf}
    \label{fig:ham-unbalanced-3}
  }\hspace{0.05cm}
  \subfigure[]{%
    \centering
    \includegraphics[page=4]{figures/ham-unbalanced.pdf}
    \label{fig:ham-unbalanced-4}
  }\hfil\hfil
  \subfigure[]{%
    \centering
    \includegraphics[page=1]{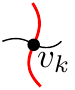}
    \label{fig:ham-balanced-1}
  }\hfil
  \subfigure[]{%
    \centering
    \includegraphics[page=2]{figures/ham-balanced.pdf}
    \label{fig:ham-balanced-2}
  }\hfil
  \subfigure[]{%
    \centering
    \includegraphics[page=3]{figures/ham-balanced.pdf}
    \label{fig:ham-balanced-3}
  }\hfil
  \subfigure[]{%
    \centering
    \includegraphics[page=4]{figures/ham-balanced.pdf}
    \label{fig:ham-balanced-4}
  }\hfil
  \subfigure[]{%
    \centering
    \includegraphics[page=5,width=0.07\textwidth]{figures/ham-balanced.pdf}
    \label{fig:ham-balanced-5}
  }
  \caption{Drawing rules for the algorithm in the proof of Theorem~\ref{th:hamiltonian-deg4}. The Hamiltonian path is drawn red and thick. Figs. (g)--(j) are to be intended up to a horizontal flip.}\label{fig:ham-deg3}
\end{figure}

\begin{restatable}{theorem}{hamiltoniandegfour}\label{th:hamiltonian-deg4}
Every planar Hamiltonian 4-graph $G$ admits a turn-regular representation $H$ with at most $3$ bends per edge, and such that only one edge of $H$ gets $3$ bends and only if $G$ is 4-regular. Given the Hamiltonian cycle, $H$ can be computed in linear~time.
\end{restatable}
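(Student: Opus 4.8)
The plan is to adapt the incremental construction of Theorem~\ref{th:biconn-deg3} to the degree-4 setting, using the Hamiltonian cycle to supply the required ordering. Writing $C = v_1, v_2, \dots, v_n, v_1$ for the Hamiltonian cycle, I observe that the Hamiltonian path $v_1, v_2, \dots, v_n$ is an $st$-ordering with $s = v_1$ and $t = v_n$: every internal vertex $v_k$ has its two path-neighbours $v_{k-1}$ and $v_{k+1}$, one before it and one after it. I would then process the vertices in this order, inserting $v_k$ on the first empty row above $\Gamma_{k-1}$ while preserving invariants (i) and (ii) as in Theorem~\ref{th:biconn-deg3}, and I again maintain invariant (iii): every reflex corner introduced points down-right or up-right, with the possible exception of reflex corners that lie on the external face and belong to edges incident to $s$ or $t$. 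The single edge of $C$ that is not on the path, the closing edge $v_1 v_n$, is routed separately along the external face.

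Next I would state the drawing rule for one insertion step, organised by the number $d^- \ge 1$ of already-placed (lower) neighbours and the number $d^+$ of not-yet-placed (upper) neighbours of $v_k$, where $d^- + d^+ \le 4$; these are precisely the configurations in Fig.~\ref{fig:ham-deg3}. The source $s$ (only upper edges) and the sink $t$ (only lower edges) are the base cases of Figs.~\ref{fig:ham-first-1}--\ref{fig:ham-last-3}. For a generic vertex I enter $v_k$ from below through its leftmost incoming column, reserve its own column for the leftmost outgoing edge, and distribute the remaining outgoing edges into freshly inserted columns; the two patterns \emph{unbalanced} (outgoing edges routed to one side, Figs.~\ref{fig:ham-unbalanced-1}--\ref{fig:ham-unbalanced-4}) and \emph{balanced} (outgoing edges split on both sides, Figs.~\ref{fig:ham-balanced-1}--\ref{fig:ham-balanced-5}) are chosen so that each newly created reflex corner points down-right or up-right, giving at most two bends on every such edge. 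As in Theorem~\ref{th:biconn-deg3}, the arguments of Biedl and Kant~\cite{DBLP:journals/comgeo/BiedlK98} then show that $\Gamma$ is planar and embedding-preserving.

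It remains to insert the closing edge $v_1 v_n$ and to bound the bends. Both $s$ and $t$ lie on the external face and are joined by a $y$-monotone path, so $v_1 v_n$ can be routed monotonically around the outer boundary. Here I would use the freedom to pick the starting vertex of $C$: if $G$ is not 4-regular, I start at a vertex of degree at most three, so $v_1$ has a free port on the outer side through which the closing edge reaches $\Gamma$ with at most two bends. If $G$ is 4-regular, no such vertex exists, every port of $v_1$ is occupied, and the closing edge may need one additional turn to attach; this is the unique edge receiving three bends, while all remaining edges receive at most two.

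Finally, turn-regularity follows exactly as in Theorem~\ref{th:biconn-deg3}. By invariant (iii) every reflex corner of an internal face points down-right or up-right; since a kitty pair consists of two reflex corners pointing in opposite diagonal directions (the partner of up-right being down-left, and that of down-right being up-left), no internal face contains such a pair and all internal faces are turn-regular. For the external face, the only reflex corners possibly violating (iii) come from the leftmost edges at $s$ and $t$ and from $v_1 v_n$; since a $y$-monotone path joins $s$ to every vertex, these corners lie on the extreme top or bottom rows of every drawing of $H$ and cannot point at one another, so the external face is turn-regular as well. I expect the insertion step to be the main obstacle: compared with the degree-3 case there are many more configurations (in particular degree-4 vertices with a $2{+}2$ split, together with the balanced/unbalanced choice), and one must check in each that invariant (iii) is preserved while staying within the two-bend budget, so that the only edge ever forced to three bends is the closing edge in the 4-regular case.
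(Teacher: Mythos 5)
Your plan breaks at its core step: invariant (iii) from Theorem~\ref{th:biconn-deg3} is not maintainable in the degree-4 setting, and no choice of insertion rules can save it. The reason is the fixed embedding $\mathcal{E}$: the chords of the Hamiltonian cycle are partitioned by $\mathcal{E}$ into those lying left and those lying right of the Hamiltonian path, and a chord $(v_i,v_j)$ embedded on the \emph{left} side must leave $v_i$ and enter $v_j$ through their west ports (the other ports are taken by spine edges or right-side chords). Such an edge starts heading west and ends heading east, so it needs at least two bends, and each of these bends necessarily has its reflex side in a face on the left of the spine, pointing down-left or up-left. When left-side chords are nested (which happens in 4-regular Hamiltonian planar graphs), these leftward-pointing reflex corners lie in \emph{internal} faces, not on the external face near $s$ or $t$, so your stated exception does not cover them. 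In the degree-3 case a vertex has at most one non-spine edge and the construction can push everything to the right; with degree 4 this is exactly what fails, and it is not a matter of checking more configurations, as you suggest at the end --- the invariant itself is false.

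The paper's proof repairs this with a \emph{two-sided} invariant: reflex corners in faces on the left side of the spine may point down-left or up-left, and reflex corners in faces on the right side may point down-right or up-right (with the usual exceptions on the external face at $v_1$ and $v_n$). Since the Hamiltonian cycle is a closed curve, every internal face lies entirely on one side of the spine, so each internal face sees only ``leftish'' or only ``rightish'' reflex corners; a kitty pair needs two corners pointing at each other (up-right with down-left, or up-left with down-right), so no internal face can contain one. This weaker invariant is what the rules of Fig.~\ref{fig:ham-deg3} actually maintain. A second gap in your proposal is the choice of $v_1$ in the 4-regular case: you only use the degree of $v_1$ to bound the bends of the closing edge, but the construction also needs a $v_1$ whose two chords are not both on the left side of the spine (the configuration of Fig.~\ref{fig:ham-first-3}); the paper proves such a vertex exists because a 4-regular planar graph cannot be outerplanar. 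Without this, the very first insertion step already has no valid drawing rule.
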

\begin{sketch}
We use as $st$-ordering for the Biedl and Kant approach~\cite{DBLP:journals/comgeo/BiedlK98} the ordering given by the Hamiltonian cycle.  We choose a suitable vertex $v_1$ from which we start the construction. The construction rules are given in Fig.~\ref{fig:ham-deg3}. 
\ifArxiv
A full example is provided in Fig.~\ref{fig:hamiltonian-deg4-whole} in Appendix~\ref{app:turn-regular}.
\fi 
If $G$ has a vertex of degree less than four, then we choose such a vertex as $v_1$.
Otherwise, $G$ is 4-regular and we prove that $G$ has at least one vertex such that the configuration of Fig.~\ref{fig:ham-first-3} is ruled out by the embedding~$\mathcal{E}$.
We maintain the invariant that all the reflex corners introduced in the drawing point 
\begin{inparaenum}[(i)] \item down-left or up-left, if they are contained in a face that is on the left side of the portion of the Hamiltonian cycle traversing $\Gamma_k$, and 
\item down-right or up-right, if they are contained in a face that is on the right side. 
\end{inparaenum}
Possible exceptions are the reflex corners on the external face and that occur on edges incident to~$v_1$ or to~$v_n$. 
\end{sketch}


\begin{restatable}{theorem}{biconnectedinternally}\label{th:biconnected-internally}
  Every biconnected planar 4-graph has a representation with $O(n)$
  bends per edge that is internally turn-regular and that is computed in $O(n)$~time.
\end{restatable}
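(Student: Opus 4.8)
The plan is to reuse the incremental framework behind Theorem~\ref{th:biconn-deg3} and Theorem~\ref{th:hamiltonian-deg4}. Fix any planar embedding $\mathcal{E}$ of the biconnected planar $4$-graph $G$, compute an $st$-ordering $s=v_1,\dots,v_n=t$, and grow a drawing $\Gamma$ by inserting $v_k$ on a fresh top row of $\Gamma_{k-1}$ while maintaining invariants (i) and (ii). The strengthened invariant I would carry along is that \emph{no internal face carries an up-left or down-left reflex corner}; equivalently, every reflex corner on an internal face points up-right or down-right. As in the proof of Theorem~\ref{th:biconn-deg3}, this invariant immediately yields internal turn-regularity: the direction that could form a kitty pair with an up-right corner is down-left, and the one that could pair with a down-right corner is up-left, so forbidding both down-left and up-left on internal faces rules out kitty corners there. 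Forbidden corners are tolerated only on the external face, which is unavoidable anyway by Corollary~\ref{cor:outer-face-three-convex} and Lemma~\ref{lem:outer-face-kitty-corners}.

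First I would set up the insertion rules. Since $\mathcal{E}$ is an embedding of a planar $st$-graph, the down-edges of $v_k$ are consecutive in its rotation, followed by its up-edges (with $v_1$ and $v_n$ handled separately, as in Theorem~\ref{th:biconn-deg3}). I place $v_k$ on the column of its leftmost down-edge, so that every remaining down-edge reaches its dedicated column by leaving $v_k$ rightwards and then turning down; symmetrically, up-edges get fresh columns to the right of $v_k$ and leave it either straight up or rightwards-then-up. A direct check of the four pointing directions shows that none of these bends places an up-left or down-left reflex corner on an internal face. With at most four incident edges, this disposes of every vertex whose down-degree and up-degree are both at most two using $O(1)$ bends per edge, exactly as in Theorems~\ref{th:biconn-deg3} and~\ref{th:hamiltonian-deg4}.

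The hard part, and the reason the bend bound degrades to $O(n)$, is a vertex $v_k$ with three edges on the same side: three down-edges (in-degree three, out-degree at most one) or three up-edges. A point-vertex offers only the four ports $E,N,W,S$, and at most two same-side edges can be realised through the $S$ and $E$ (resp.\ $N$ and $E$) ports without creating an up-left or down-left corner; the third same-side edge is forced through a $W$ (or $N$) port and unavoidably produces a forbidden corner. My plan is to route this single offending edge along the current external contour: it leaves $v_k$ towards the outside, follows the left (or top) boundary of $\Gamma_{k-1}$ through reserved fresh space, and re-enters only to reach its endpoint, so that all of its bends are incident to the external face. Such an edge may traverse a linear number of rows, which is precisely what raises the bend count to $O(n)$ per edge, while all other edges keep $O(1)$ bends. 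The main obstacle is to prove that these forbidden corners can always be confined to the external face and, crucially, that they are \emph{not} later enclosed into an internal face by vertices and edges inserted in subsequent steps. I expect this to require strengthening invariant (ii) so that the external contour always reserves room on its offending side, together with a case analysis of how the leftmost and rightmost pending edges interact with $v_k$; this bookkeeping is the delicate core of the argument.

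Once the confinement of forbidden corners to the external face is established, internal turn-regularity follows from the direction invariant exactly as above. The whole construction is a single sweep over the $st$-ordering, spending $O(\deg v_k)$ time per vertex plus amortised column- and contour-management, so the resulting representation $H$ is produced in $O(n)$ time with $O(n)$ bends per edge, matching the claimed bounds.
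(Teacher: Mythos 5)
You correctly isolate the obstruction---a vertex with three down-edges (or three up-edges) cannot be attached through its four ports without creating an up-left or down-left reflex corner---but the step you explicitly defer as ``the delicate core'' is the entire difficulty, and your proposed routing cannot deliver it. In a bottom-up construction, ``incident to the external face'' at step $k$ is an ephemeral property: the drawing keeps growing, and as soon as a second offending vertex $v_{k'}$ with $k' > k$ needs its excess edge routed around the same side of the contour, the new edge necessarily wraps outside the old one (which is by then part of the contour), and the strip between the two routed edges becomes an \emph{internal} face of the final drawing. That face inherits the inner edge's up-left/down-left reflex corners, so the direction invariant you rely on fails exactly where it matters, and you give no alternative argument for why these wrapping faces are free of kitty corners. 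Since a biconnected planar 4-graph can contain linearly many vertices of in-degree three, this nesting is unavoidable rather than a corner case, and no amount of ``reserving room'' on the offending side prevents it: whatever is reserved eventually gets enclosed by later insertions.

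The paper's proof concedes precisely this point and abandons the directional invariant for internal faces altogether. It replaces the bottom-up sweep by a \emph{spiraling} insertion order, so that wrapping is not an exception for offending edges but the systematic behavior of all edges, and it proves internal turn-regularity structurally rather than directionally: every internal face $f$ is bounded by two paths $P_\ell$ and $P_r$ from the first to the last inserted vertex of $f$, where $P_r$ carries only convex corners and on $P_\ell$ every convex corner is immediately preceded or followed by a reflex corner. Consequently, for any two reflex corners $c_i,c_j$ on $P_\ell$, the partial sums in $\rot(c_i,c_j)$ never reach $2$, and since $\rot(c_i,c_j)+\rot(c_j,c_i)=4$ on an internal face, neither direction can equal $2$, ruling out kitty corners. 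The $O(n)$ bound on bends then comes from the spiral itself (an edge $(v_i,v_j)$ accumulates $O(j-i)$ bends), not from one exceptional edge hugging the contour. Your instinct to route excess edges around the drawing is the right one, but to make it provable the wrapping must be global and the turn-regularity argument must address the corner sequences of the internal faces that the wrapping creates---which is what your proposal leaves open.
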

\begin{proof}
We modify the algorithm of Biedl and Kant~\cite{DBLP:journals/comgeo/BiedlK98} again, where instead of the standard bottom-up construction, we adopt a spiraling one. The vertices are inserted in the drawing according to an $st$-ordering, based on the rules depicted in 
\ifArxiv
Fig.~\ref{fig:roll} (a full example of the algorithm is in Fig.~\ref{fig:roll-whole} in Appendix~\ref{app:turn-regular}). 
\else
Fig.~\ref{fig:roll}.
\fi
For an internal face $f$ let $s(f)$ ($d(f)$, resp.) be the index of the first (last, resp.) inserted vertex incident to $f$. By construction, $f$ is bounded by two paths $P_{\ell}$ and $P_r$ that go from $v_{s(f)}$ to $v_{d(f)}$, where $P_{\ell}$ precedes $P_r$ in the left-to-right list of outgoing edges of~$v_{s(f)}$. The construction rules imply that $P_r$ has only convex corners. On the other hand, each convex corner of $P_{\ell}$ is always immediately preceded or immediately followed by a reflex corner. This rules out kitty corners in $f$. Indeed, consider to reflex corners $c_i$ and $c_j$ of $P_{\ell}$ and the counter-clockwise path from $c_i$ to $c_j$ all contained into $P_{\ell}$. When computing $\rot(c_i,c_j)$  a positive amount $+1$ is always followed by a negative amount $-1$, and the sum is never equal to $2$. Since $f$ is an internal face,  $\rot(c_i,c_j)+\rot(c_j,c_i) = 4$, and $\rot(c_i,c_j) \neq 2$ implies $\rot(c_j,c_i) \neq 2$. Note that an edge $(v_i,v_j)$ contains $O(j-i)$ bends, which yields $O(n)$ bends per edge in the worst case. 
\end{proof}

\begin{figure}[tb]
  \centering
  \subfigure[]{%
    \centering
    \includegraphics[page=1,width=0.22\textwidth]{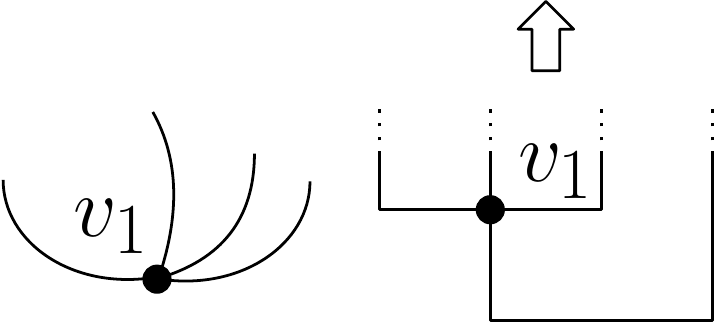}
    \label{fig:roll-1}
  }\hfil
  \subfigure[]{%
    \centering
    \includegraphics[page=2,width=0.18\textwidth]{figures/roll.pdf}
    \label{fig:roll-2}
  }\hfil
  \subfigure[]{%
    \centering
    \includegraphics[page=3,width=0.18\textwidth]{figures/roll.pdf}
    \label{fig:roll-3}
  }\hfil
  \subfigure[]{%
    \centering
    \includegraphics[page=4,width=0.18\textwidth]{figures/roll.pdf}
    \label{fig:roll-4}
  }\hfil
  
  \subfigure[]{%
    \centering
    \includegraphics[page=5,width=0.18\textwidth]{figures/roll.pdf}
    \label{fig:roll-5}
  }\hfil
  \subfigure[]{%
    \centering
    \includegraphics[page=6,width=0.15\textwidth]{figures/roll.pdf}
    \label{fig:roll-6}
  }\hfil
  \subfigure[]{%
    \centering
    \includegraphics[page=7,width=0.22\textwidth]{figures/roll.pdf}
    \label{fig:roll-7}
  }
  \caption{The construction rules for the algorithm in the proof of Theorem~\ref{th:biconnected-internally}.}\label{fig:roll}
\end{figure}

\section{Characterization \& Recognition of Turn-Regular Trees}\label{sec:testing-turn-regularity}

We give a characterization of the trees that admit turn-regular representations, which we use to derive a corresponding linear-time testing and drawing algorithm. 
For a tree $T$, let $\smooth(T)$ denote the tree obtained from $T$ by smoothing all subdivision vertices, i.e., $\smooth(T)$ is the unique smallest tree that can be subdivided to obtain a tree isomorphic to $T$. We start with an auxiliary lemma which is central in our approach 
\ifArxiv
(see Appendix~\ref{app:trees} for details). 
\else
(see~\cite{bbddgkpr-tror-20-arxiv} for details).
\fi

\begin{restatable}{lemma}{rectilineartrees}\label{lem:rectilinear-trees}
$T$ is turn-regular if and only if $\smooth(T)$ is rectilinear turn-regular.
\end{restatable}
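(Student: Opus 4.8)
The starting observation is that a tree has a single (external) face, so turn-regularity is entirely a statement about the circular corner sequence of that one face $\rect f$. I would first fix the corner bookkeeping for this face. Each leaf is a $360^\circ$ vertex and thus contributes an ordered pair of consecutive reflex corners; each degree-$3$ vertex contributes two convex and one flat corner (its angles are forced to be $\{90^\circ,90^\circ,180^\circ\}$); each degree-$4$ vertex contributes four convex corners; and, since both sides of every edge bound the same face, each bend -- and likewise each degree-$2$ vertex realized as a geometric bend -- is met twice and contributes exactly one convex and one reflex corner, whereas a straight (flat) degree-$2$ vertex contributes two flat corners. The key consequence, used throughout, is that flat corners have $\turn=0$ and are never reflex, so inserting or deleting them changes neither the set of kitty-corner candidates nor any value $\rot(c_i,c_j)$ between the surviving corners.

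\medskip\noindent\emph{Easy direction ($\smooth(T)$ rectilinear turn-regular $\Rightarrow$ $T$ turn-regular).} Given a rectilinear turn-regular representation $H'$ of $\smooth(T)$, I recover $T$ by subdividing each edge and placing every subdivision vertex as a flat degree-$2$ vertex on the corresponding straight segment. The resulting (again rectilinear) representation $H$ of $T$ has a face that differs from that of $H'$ only by added flat corners, hence by the observation above it has the same reflex corners and the same $\rot$ values among them, and therefore no kitty corners. Thus $T$ is turn-regular.

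\medskip\noindent\emph{Hard direction ($T$ turn-regular $\Rightarrow$ $\smooth(T)$ rectilinear turn-regular).} Passing from a turn-regular representation $H$ of $T$ to its rectilinear image $\rect H$ turns every bend into a degree-$2$ vertex, giving a rectilinear turn-regular representation of some subdivision $T'$ of $S:=\smooth(T)$; since smoothing is blind to whether a degree-$2$ vertex comes from $T$ or from a bend, $\smooth(T')=S$. It therefore suffices to prove the normal-form statement: \emph{if a tree $S$ without degree-$2$ vertices admits a turn-regular representation, then it admits a rectilinear one.} In a rectilinear representation of $S$ the only reflex corners are the two per leaf, because the branch-vertex angles are forced; the sole freedom is the placement of the single flat gap at each degree-$3$ vertex. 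My plan is to take a turn-regular representation of $S$ with the fewest bends and show it has none, by straightening edges and re-propagating edge directions through the (acyclic) tree, using the degree-$3$ gaps as the compensating degrees of freedom.

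\medskip\noindent\textbf{Main obstacle.} The delicate point, which I expect to be the heart of the argument, is that straightening is \emph{not} monotone for turn-regularity. Straightening a bend deletes a $+1$ (convex) and a $-1$ (reflex) corner at two different places of the boundary walk (the two sides of the edge), so for a pair of surviving leaf reflex corners $c_i,c_j$ the quantity $\rot(c_i,c_j)$ is shifted by the net turn $B$ of the deleted bend corners lying in the arc $[c_i,c_j)$. Although the bends contribute net turn $0$ over the whole boundary, a sub-arc may have $B\neq 0$: if the straightened image has $\rot(c_i,c_j)=2$, then by Properties~\ref{pr:rot-1} and~\ref{pr:rot-3} the corresponding values in $H$ are $2+B$ and $-6-B$, which both avoid $2$ exactly when $B\notin\{0,-8\}$, so a pair that was safe in $H$ can become a kitty pair after straightening. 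Hence one cannot simply ``straighten everything''. The plan to overcome this is to remove bends leaf-edge by leaf-edge in a bottom-up order while re-choosing the degree-$3$ gaps so that each removed pair of bend corners stays net-neutral on every arc between leaf corners (equivalently, so that the straightened edge preserves the cyclic direction data that the face traversal actually constrains), certifying the outcome with the kitty-corner criterion and Corollary~\ref{cor:outer-face-three-convex}. Proving that such a compensating gap choice always exists, consistently across the whole tree, is the technical crux; I would attempt it by induction on the number of branch vertices, peeling a deepest branch vertex with its pendant leaf edges and invoking the inductive turn-regular representation on the contracted tree.
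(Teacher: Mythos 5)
Your setup, your treatment of the easy direction, and your reduction of the hard direction to the normal-form statement (a tree without degree-$2$ vertices that admits a turn-regular representation admits a rectilinear one) are all correct, and your corner bookkeeping matches the paper's. But the proof stops exactly where the real work begins. The whole hard direction rests on the claim that bends can be eliminated while preserving turn-regularity, and you yourself show why naive straightening fails: the $\rot$ values between surviving reflex corners shift by the net turn of the deleted corners lying in the arc between them. Your proposed resolution --- take a bend-minimal turn-regular representation, remove bends ``leaf-edge by leaf-edge'' while re-choosing the degree-$3$ flat gaps so that every deleted convex/reflex pair is net-neutral on every arc, justified by an induction that peels a deepest branch vertex --- is only announced, never carried out. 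You give no argument that a compensating choice of gaps exists, that such choices can be made consistently across the tree, or that the re-attachment step of the peeling induction goes through. The last point is not a formality: turn-regular trees form a very thin class (Theorem~\ref{thml:windmills}: caterpillars, windmills, double windmills), so re-attaching pendant structure to an inductively obtained representation is essentially as hard as the original problem. As it stands, the hard direction is a plan, not a proof.

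For contrast, the paper avoids any global compensation by performing only local, provably safe moves on the rectilinear image $\rect H$ of a turn-regular representation $H$, chain by chain: (i) along each maximal path $\pi_{uv}$ of degree-$2$ vertices between branch/leaf vertices $u$ and $v$, every adjacent convex--reflex pair (zig-zag) is replaced by two flat corners; this is safe precisely because the two deleted corners are \emph{consecutive} and have opposite turns, so every arc between surviving reflex corners either contains both or neither and keeps its $\rot$ value; (ii) after zig-zag removal, all remaining non-flat corners on the chain turn the same way, and Lemma~\ref{lem:outer-face-kitty-corners} (via Corollary~\ref{cor:outer-face-three-convex}) caps their number at two; (iii) these one or two convex corners are then eliminated by rerouting the chain around $u$ or $v$, where the needed free direction at $u$ or $v$ exists, again by Corollary~\ref{cor:outer-face-three-convex} applied to the turn-regular $\rect H$. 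Note that step (i) achieves exactly the arc-wise rot-neutrality you were trying to engineer globally, but gets it for free because the deleted pair is consecutive on the face boundary; that observation is the missing idea that makes a purely local argument suffice.
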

\begin{sketch}
Suppose that $T$ has a turn-regular representation $H$ (the other direction is obvious).  We can assume that $H$ has no zig-zag edges. By Corollary~\ref{cor:outer-face-three-convex}, the rectilinear image of $H$ has at most two consecutive convex corners, which can be removed with local transformations as in Fig.~\ref{fig:rectilinear-trees}. This results 
in a turn-regular representation with only flat corners at degree-2 vertices as desired.       
\end{sketch}

\begin{figure}[tb]
	\centering
	\subfigure[]{%
		\centering
		\includegraphics[page=2]{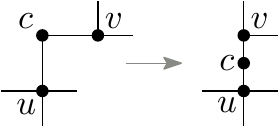}
		\label{fig:rectilinear-trees-1-a}
	}\hfill
	\subfigure[]{%
		\centering
		\includegraphics[page=1]{figures/rectilinear-trees.pdf}
		\label{fig:rectilinear-trees-1-b}
	}\hfill
	\subfigure[]{%
		\centering
		\includegraphics[page=3]{figures/rectilinear-trees.pdf}
		\label{fig:rectilinear-trees-2}
	}
	\caption{Illustrations for the proof of Lemma~\ref{lem:rectilinear-trees}.}
	\label{fig:rectilinear-trees}
\end{figure}

Unless otherwise specified, from now on we will assume by Lemma~\ref{lem:rectilinear-trees} that~$T$ is a tree without degree-$2$ vertices. We will further refer to a tree as turn-regular if and only if it is rectilinear turn-regular. This implies that the class of turn-regular trees coincides with the class of trees admitting planar straight-line \emph{convex drawings}, i.e., all edges incident to leaves can be extended to infinite crossing-free rays, whose edges are horizontal or vertical segments. 
%
%
The next property directly follows from Lemma~\ref{lem:rectilinear-trees} and the absence of degree-$2$ vertices. 


\begin{property}\label{prp:reflex-leaves}
Let $H$ be a turn-regular rectilinear representation of a tree $T$. Then, the reflex corners of $H$ are formed by the leaves of $T$.
\end{property}

\noindent While turn-regularity is not a hereditary property in general graphs, the next lemma, 
\ifArxiv
whose proof can be found in Appendix~\ref{app:trees}, 
\else
whose proof can be found in~\cite{bbddgkpr-tror-20-arxiv}, 
\fi
shows that it is in fact hereditary~for~trees.

\begin{restatable}{lemma}{hereditary}\label{lem:subtrees}
If a tree $T$ is turn-regular, then any subtree of $T$ is turn-regular.
\end{restatable}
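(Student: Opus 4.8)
The plan is to prove the contrapositive: if some subtree $T'$ of $T$ is not turn-regular, then $T$ itself is not turn-regular. By Lemma~\ref{lem:rectilinear-trees} we work entirely with rectilinear representations of trees without degree-$2$ vertices, so ``turn-regular'' means ``rectilinear turn-regular.'' Fix a turn-regular rectilinear representation $H$ of $T$; I want to extract from it a turn-regular rectilinear representation $H'$ of any subtree $T'$. Since a tree has a single (external) face, the core difficulty is understanding how the cyclic sequence of corners of $T'$ relates to that of $T$, and how the $\rot$ values behave under restriction.

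\medskip\noindent\textbf{Key steps.}
First I would set up the restriction operation. Given $H$ and a subtree $T'$, let $H'$ be the representation of $T'$ inherited from $H$ (same edge directions and vertex angles). I would describe how the boundary walk of the single face of $H'$ is obtained from that of $H$: traversing the boundary of $H$, whenever we enter a part of $T$ not belonging to $T'$ via a vertex $v$ that is shared with $T'$, we ``skip'' that excursion and continue along $T'$. The angle at $v$ in $H'$ is the angle subtended by the two edges of $T'$ consecutive in the cyclic edge order around $v$, which is at least as large as the corresponding angle in $H$ (removing edges can only merge angular sectors, increasing the angle at a retained vertex). By Property~\ref{prp:reflex-leaves}, in $H$ the only reflex corners sit at leaves of $T$; I must check that every reflex corner of $H'$ likewise sits at a leaf of $T'$, which follows because a non-leaf of $T'$ is also a non-leaf of $T$ and, having its angles only enlarged toward (but not past) convex/flat values, cannot become reflex.

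\medskip The crucial step is the inequality on rotation numbers. For two reflex corners $c_i,c_j$ of $H'$ (hence of $H$, at common leaves of $T'$), I would compare $\rot_{H'}(c_i,c_j)$ with $\rot_{H}(c_i,c_j)$ along the corresponding boundary arcs. Each maximal excursion of $H$ into $T\setminus T'$ that is skipped in $H'$ is a closed sub-walk beginning and ending at a single shared vertex $v$; by Property~\ref{pr:rot-1} applied to such a ``detour'' (it traces a contractible loop), its total turn contribution is a fixed constant, and replacing it by staying at $v$ changes the rotation along the arc by a controlled, non-positive amount at each retained shared vertex (because skipping an excursion replaces a smaller angle at $v$ by a larger one, i.e.\ removes turning). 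Concretely I expect $\rot_{H'}(c_i,c_j)\ge \rot_{H}(c_i,c_j)$ is false in that direction and instead the excursions contribute extra $+$ turning, so that the honest statement is $\rot_{H}(c_i,c_j)=\rot_{H'}(c_i,c_j)+(\text{turning of skipped excursions})$. I would then argue that if $c_i,c_j$ were kitty corners in $H'$, i.e.\ $\rot_{H'}(c_i,c_j)=2$ (or the symmetric condition), this would force, via the external-face identity $\rot(c_i,c_j)+\rot(c_j,c_i)=-4$ from Property~\ref{pr:rot-1} and Lemma~\ref{lem:outer-face-kitty-corners}, a kitty-corner pair already present in $H$, contradicting turn-regularity of $H$.

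\medskip\noindent\textbf{Main obstacle.}
I expect the hard part to be the bookkeeping of the rotation numbers across the skipped excursions and, in particular, getting the sign and magnitude of each excursion's turn contribution exactly right, so that a value of $2$ in $H'$ provably implies a kitty-corner pair in $H$ rather than merely some nearby value. A clean way around this is to avoid comparing arbitrary arcs and instead invoke Lemma~\ref{lem:outer-face-kitty-corners} directly: from a reflex corner $c_i$ of $H'$ with $\rot_{H'}(c_i,c_j)\ge 2$ I would show there exist corresponding corners in $H$ (the same two leaves) with $\rot_{H}\ge 2$ and $c_i$ still reflex, so Lemma~\ref{lem:outer-face-kitty-corners} yields a kitty-corner pair in the external face of $H$. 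This reduces the whole argument to establishing a single monotonicity-of-rotation fact under subtree restriction, which is the technical crux, rather than reconstructing the entire corner sequence.
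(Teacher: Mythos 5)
Your proposal has a genuine gap at its central structural claim. You assert that every reflex corner of the restricted representation $H'$ sits at a leaf of $T'$, ``because a non-leaf of $T'$ is also a non-leaf of $T$ and, having its angles only enlarged toward (but not past) convex/flat values, cannot become reflex.'' This is false: when an excursion into $T\setminus T'$ is skipped at a retained vertex $v$, the angular sectors at $v$ merge, and nothing stops the merged angle from exceeding $180^\circ$. For instance, if $v$ has degree~$3$ in $T$ with angles $90^\circ,90^\circ,180^\circ$ and the deleted subtree hangs off the edge between the $90^\circ$ and the $180^\circ$ sectors, then $v$ has degree~$2$ in $T'$ and carries a $270^\circ$ (reflex) corner in $H'$. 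Note also that Property~\ref{prp:reflex-leaves} cannot be invoked for $H'$ at all: it presupposes a tree without degree-$2$ vertices, and a subtree of such a tree typically does have degree-$2$ vertices. Consequently, a kitty corner of $H'$ need not correspond to any corner of $H$, so your reduction ``same two leaves, apply Lemma~\ref{lem:outer-face-kitty-corners} in $H$'' does not go through as stated. This is not a peripheral issue: these newly created reflex corners are precisely the difficulty of the lemma, and your proposal dismisses them with an incorrect argument while explicitly leaving the rotation bookkeeping (``getting the sign and magnitude of each excursion's turn contribution exactly right'') unresolved.

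For comparison, the paper avoids the global bookkeeping entirely by removing one leaf at a time and inducting: if $H_{i-1}$ (the restriction after $i-1$ removals) is turn-regular, then so is $H_i$. The only non-trivial case is exactly the one you ruled out incorrectly, namely when deleting the leaf $u$ creates a new reflex corner at its neighbor $v$; a short case analysis (on $\deg(v)$ and on whether the deleted edge $(u,v)$ was horizontal or vertical) shows that any kitty pair involving this new corner in $H_i$ forces a kitty pair already in $H_{i-1}$, using either a reflex corner of $u$ or the reflex corner $\zeta_v$ formed by $(u,v)$ and $(v,w)$. Your all-at-once restriction could in principle be salvaged: one can show that skipping an excursion preserves rotation exactly (each excursion plus the split angle at $v$ contributes the same net turn as the merged corner), and then a new reflex corner of $H'$ at $v$ can be replaced by a reflex corner of $H$ at a leaf inside the skipped excursion with rotation still at least~$2$, after which Lemma~\ref{lem:outer-face-kitty-corners} applies. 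But none of this is in your proposal, and what stands in its place is a false claim.
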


%
%

A \emph{trivial tree} is a single edge; otherwise, it is \emph{non-trivial}. 
For $k\in\{2,3\}$, a \emph{$k$-fork} in a tree $T$ consists of a vertex $v$ whose degree is $k+1$ and at least $k$ leaves adjacent to it in $T$.

For $k\in\{2,3\}$, a \emph{$k$-fork} at a vertex $v$ in a tree $T$ consists of vertex $v$ whose degree is $k+1$ and at least $k$ leaves adjacent to it in $T$.
Due to the degree restriction, a $2$-fork is not a $3$-fork, and vice versa. Also, notice that by definition $K_{1,4}$ is a $3$-fork. 
The next lemma follows from~\cite[Lem.~7]{DBLP:conf/gd/CarlsonE06}; a simplified proof is given 
\ifArxiv
in Appendix~\ref{app:trees}. 
\else
in~\cite{bbddgkpr-tror-20-arxiv}.
\fi

\begin{restatable}
{lemma}{treesforks}
\label{lem:trees-fork}
A turn-regular tree has  
\begin{inparaenum}[(i)]
\item at most four $2$-forks and no $3$-fork, or
\item two $3$-forks and no $2$-fork, or
\item one $3$-fork and at most two~$2$-forks.
\end{inparaenum}
\end{restatable}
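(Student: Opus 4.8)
The statement is a counting bound on the number of forks in a turn-regular tree. The key resource is Corollary~\ref{cor:outer-face-three-convex}: the external face of $\rect H$ cannot contain three consecutive convex corners. Since (by our standing assumption after Lemma~\ref{lem:rectilinear-trees}) $T$ has no degree-$2$ vertices and admits a rectilinear representation whose only reflex corners are at the leaves (Property~\ref{prp:reflex-leaves}), every corner of the external face is either a convex corner at some internal vertex or a reflex corner at a leaf. The plan is to translate the geometric constraint ``no three consecutive convex corners on the outer face'' into an arithmetic constraint on how forks distribute their convex corners around the outer boundary, and then do a global rotation count.

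\textbf{Key steps.}
First I would set up the external-face corner sequence. Walking the boundary of the external face $\rect f$ clockwise, I classify each corner as convex ($\turn=+1$) or reflex ($\turn=-1$); there are no flat corners at the relevant vertices because degree-$2$ vertices are smoothed. By Property~\ref{pr:rot-1}, $\rot(c,c)=-4$ for the external face, so if $a$ is the total number of convex corners and $b$ the number of reflex corners along the outer boundary, then $a-b=-4$, i.e.\ $b=a+4$. Second, I would count the contribution of each vertex type to $a$ and $b$. A leaf contributes a pair of reflex corners (it points in one of the four cardinal directions), so each leaf contributes two to $b$. An internal vertex of degree $d$ on the outer face contributes some convex corners; in particular, the crucial observation is that a $k$-fork (degree $k+1$ with $k$ leaf-neighbours, $k\in\{2,3\}$) forces a tight local configuration: the $k$ leaves hanging off $v$ create a cluster of convex corners at $v$ together with reflex corners at the leaves, and the turn-regularity/Corollary constraint limits how many such convex corners can accumulate before three consecutive convex corners appear on the outer face. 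Third, I would argue that each $3$-fork contributes (in the ``convex budget'') the equivalent of three convex turns locally while a $2$-fork contributes two, and that Corollary~\ref{cor:outer-face-three-convex} caps the achievable total. Combining $b=a+4$ with the per-fork contributions yields a linear inequality whose integer solutions are exactly the three listed cases: weighting each $3$-fork as $3$ and each $2$-fork as $2$, the admissible configurations are (four $2$-forks, no $3$-fork), (two $3$-forks, no $2$-fork), and (one $3$-fork, at most two $2$-forks). The cited Lemma~7 of Carlson--Eppstein~\cite{DBLP:conf/gd/CarlsonE06} gives essentially the same bound for convex straight-line tree drawings, and since turn-regular trees coincide with orthogonally-drawable convex trees (as noted after Lemma~\ref{lem:rectilinear-trees}), the correspondence transports their count directly; the simplified proof replaces their straight-line angle argument with the discrete rotation identity $b=a+4$.

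\textbf{Main obstacle.}
The hard part will be making precise the claim that each fork's leaves necessarily lie on the \emph{external} face and therefore each fork's convex corners actually contribute to the outer-boundary count rather than being ``hidden'' inside — in a tree every face is the single (external) face of $\rect H$, so this is automatic, but one must be careful that smoothing degree-$2$ vertices does not change which corners are convex/reflex, and that a $3$-fork cannot be drawn so as to avoid creating the critical triple of near-consecutive convex corners. The cleanest route is to show that around a $3$-fork vertex $v$ the three incident leaf-edges plus the single non-leaf edge force, up to rotation, a configuration in which two of the three ``gaps'' between consecutive leaves are convex corners at $v$ with a lone reflex corner (the leaf) between them, so that two $3$-forks already consume the entire slack permitted by $b=a+4$ before the no-three-consecutive-convex constraint would otherwise be violated. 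Once this local analysis is pinned down, the global inequality and its integer solutions follow mechanically, so I would expect the write-up to reduce to (a) the rotation identity, (b) a short case analysis of the local fork configurations, and (c) solving the resulting inequality.
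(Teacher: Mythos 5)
Your plan breaks down at the final arithmetic, and this is fatal. Writing $f_2,f_3$ for the numbers of $2$- and $3$-forks, your proposed weights (each $2$-fork counts $2$, each $3$-fork counts $3$) admit \emph{no} cap $C$ that reproduces the lemma's case list: case (i) (four $2$-forks) requires $2f_2=8\le C$, but then the forbidden configuration $f_3=2,\,f_2=1$ has weight $3\cdot 2+2\cdot 1=8\le C$ and would be admitted as well. The inequality that actually characterizes cases (i)--(iii) is $f_2+2f_3\le 4$, and that is what the paper proves: it defines a ``change in direction'' between leaves that are consecutive along the external face, shows that each change contributes $\rot\le -1$ while non-changes contribute $0$, concludes from $\rot(c,c)=-4$ (Property~\ref{pr:rot-1}) that at most four changes can occur, and finally observes that a $2$-fork forces one change while a $3$-fork forces two.

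There is also a conceptual gap in where you locate the turn-regularity constraint. Your global identity $b=a+4$ is vacuous: each leaf carries two reflex corners and each internal vertex of degree $3$ (resp.\ $4$) carries two (resp.\ four) convex corners, so $b=a+4$ reduces to the leaf-count formula $L=n_3+2n_4+2$, which holds for \emph{every} tree without degree-$2$ vertices and every rectilinear representation of it, turn-regular or not; it therefore cannot cap anything. The constraint that genuinely uses turn-regularity is local, and it comes from Lemma~\ref{lem:outer-face-kitty-corners} rather than Corollary~\ref{cor:outer-face-three-convex}: if $c_u'$ is the second reflex corner of a leaf $u$ and $c_v$ the first reflex corner of the next leaf $v$, then $\rot(c_u',c_v)\ge 2$ would force kitty corners, so each leaf-to-leaf gap contains at most two convex corners. (The corollary alone does not suffice here, because --- contrary to your assertion that no flat corners exist --- every degree-$3$ vertex carries a $180^\circ$ angle, and convex corners separated by flat ones are not ``consecutive''.) With that per-gap bound your budget idea can be repaired: the deficiencies $2-g_i$, where $g_i\le 2$ is the number of convex corners in gap $i$, sum to exactly $4$ over all gaps; a $2$-fork forces deficiency at least $1$ and a $3$-fork forces deficiency exactly $2$, giving $f_2+2f_3\le 4$ as above. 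As written, however, the proposal neither states the correct inequality nor establishes the local bound on which it would rest.
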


\begin{lemma}\label{lem:nontrivial-subtree}
A non-trivial tree $T$ contains at least one $2$- or $3$-fork. 
\end{lemma}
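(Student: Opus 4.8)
The plan is to prove the contrapositive-free statement directly by analyzing the degree sequence of a tree that avoids both types of forks. Recall that, after smoothing, we assume $T$ has no degree-$2$ vertices, so every internal vertex has degree at least $3$ and every non-internal vertex is a leaf. A $2$-fork is a degree-$3$ vertex with at least two leaf neighbors, and a $3$-fork is a degree-$4$ vertex with at least three leaf neighbors. Since $T$ is non-trivial (not a single edge), it has at least one internal vertex, and since smoothing removed all degree-$2$ vertices, every internal vertex has degree $3$ or $4$.

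First I would pick a leaf-farthest internal vertex to force many leaf neighbors. Concretely, root $T$ at an arbitrary internal vertex and consider an internal vertex $v$ whose distance to the root is maximum among all internal vertices (equivalently, take a deepest internal vertex in a BFS/DFS layering). By maximality, every neighbor of $v$ that is farther from the root than $v$ must be a leaf, because any such neighbor is either a leaf or an internal vertex at strictly greater distance, contradicting the choice of $v$. Hence at most one neighbor of $v$ (the one toward the root) is a non-leaf, so $v$ has at least $\deg(v)-1$ leaf neighbors. Since $\deg(v)\in\{3,4\}$, this gives at least $2$ leaf neighbors when $\deg(v)=3$ and at least $3$ leaf neighbors when $\deg(v)=4$.

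Then I would read off the fork directly from the degree of $v$. If $\deg(v)=3$, then $v$ has at least two leaf neighbors, so $v$ is a $2$-fork by definition. If $\deg(v)=4$, then $v$ has at least three leaf neighbors, so $v$ is a $3$-fork. Either way $T$ contains a $2$- or $3$-fork, which is exactly the claim. The only subtlety is ensuring that an internal vertex exists at all and that the "toward the root" neighbor is the unique possible non-leaf: the former follows because a non-trivial smoothed tree has a vertex of degree at least $3$, and the latter follows from the maximality of the distance of $v$.

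The main obstacle I expect is the boundary bookkeeping around the assumption that $T$ is degree-$2$-free. The lemma is stated for a general non-trivial tree $T$, but the surrounding text (after Lemma~\ref{lem:rectilinear-trees}) has fixed the convention that $T$ has no degree-$2$ vertices; I would make this explicit and note that smoothing does not destroy non-triviality (a subdivided single edge smooths back to a single edge, which is the trivial case excluded by hypothesis). With that convention in place every internal vertex has degree exactly $3$ or $4$, and the deepest-internal-vertex argument above goes through with no further case analysis.
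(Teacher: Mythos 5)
Your proof is correct and follows the same idea as the paper's: the paper simply asserts that a non-leaf vertex of degree three or four adjacent to enough leaves must exist, while you justify that assertion explicitly via a deepest-internal-vertex argument. Your version is a more rigorous rendering of the same one-step proof, and the degree-$2$-free convention you invoke is indeed the one fixed in the paper after Lemma~\ref{lem:rectilinear-trees}.
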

\begin{proof}
Since $T$ is non-trivial and contains no vertices of degree two, there exists a non-leaf vertex $v$ with degree either three or four, such that $v$ is adjacent to exactly two or three leaves, respectively. Thus, the claim follows.
\end{proof}


\begin{corollary}\label{cor:atmost4subtrees}
A turn-regular tree has at most four non-trivial disjoint subtrees.
\end{corollary}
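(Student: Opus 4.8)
The plan is to turn this into a packing argument: each non-trivial disjoint subtree is forced to contain its own fork, and Lemma~\ref{lem:trees-fork} caps the total number of forks a turn-regular tree can have at four. Concretely, let $T_1,\dots,T_m$ be pairwise disjoint non-trivial subtrees of a turn-regular tree $T$. First I would invoke Lemma~\ref{lem:subtrees} to conclude that each $T_i$ is itself turn-regular, and then apply Lemma~\ref{lem:nontrivial-subtree} to each $T_i$: being non-trivial, every $T_i$ contains at least one $2$- or $3$-fork. Since the subtrees are vertex-disjoint, the fork vertices obtained in distinct $T_i$ are pairwise distinct vertices of $T$, so it suffices to argue that these $m$ distinct vertices are genuine forks of the ambient tree $T$.

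The second step is the correspondence between a fork of a subtree $T_i$ and a fork of $T$. Here I would exploit that a fork is a local structure: a $k$-fork at $v$ requires $v$ to have degree $k+1$ and at least $k$ adjacent leaves. For the vertex $v$ produced by Lemma~\ref{lem:nontrivial-subtree} inside $T_i$, one must check that $v$ has the same degree in $T$ as in $T_i$ and that the leaf-neighbors witnessing the fork in $T_i$ are leaves of $T$ as well; under these conditions $v$ is a fork of $T$ of the same type. For the decomposition used here the subtrees are taken to be descendant-closed, so that every leaf of $T_i$ (other than, possibly, its root) is a leaf of $T$ and every non-root vertex keeps its degree; the fork vertex supplied by Lemma~\ref{lem:nontrivial-subtree} is internal to $T_i$ and hence transfers to a fork of $T$. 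One also checks that the standing degree-$2$-free/smoothing convention carried over from Lemma~\ref{lem:rectilinear-trees} is respected by the subtrees, so that Lemma~\ref{lem:nontrivial-subtree} is applicable to each $T_i$.

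Finally, I would combine the above with Lemma~\ref{lem:trees-fork}. In each of its three cases the total number of forks of $T$ is at most four: four $2$-forks in case (i), two $3$-forks in case (ii), and one $3$-fork together with at most two $2$-forks in case (iii). Since we have exhibited $m$ pairwise distinct forks of $T$, one per subtree, we obtain $m \le 4$, as claimed. I expect the main obstacle to be precisely the middle step, namely guaranteeing that the fork detected \emph{inside} a subtree is a fork of $T$ (same degree, leaf-neighbors that are leaves of $T$); for an unrestricted connected subgraph this correspondence can break down, so the argument must rely on the specific (descendant-closed) notion of subtree under which forks are preserved. Once that correspondence is pinned down, the counting conclusion from Lemma~\ref{lem:trees-fork} is immediate.
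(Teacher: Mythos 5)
Your proposal is correct and is exactly the argument the paper intends: the corollary is stated without proof as an immediate consequence of Lemma~\ref{lem:nontrivial-subtree} (every non-trivial subtree contains a fork) and Lemma~\ref{lem:trees-fork} (a turn-regular tree has at most four forks in every case of the trichotomy), combined with the observation that disjoint subtrees yield distinct fork vertices. Your extra care in the middle step --- verifying that a fork found inside a descendant-closed subtree is genuinely a fork of the ambient tree $T$, since non-root vertices keep their degrees and non-root leaves of the subtree are leaves of $T$ --- addresses a detail the paper silently glosses over, and resolves it in the right way.
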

A vertex~$v$ of a tree~$T$ is a \emph{splitter} if $v$ is adjacent to at least three~non-leaf~vertices.
\begin{restatable}{lemma}{lemmaxsplitters}\label{lem:max-splitters}
A turn-regular tree $T$ contains at most two splitters.
\end{restatable}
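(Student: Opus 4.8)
The plan is to reduce the count of splitters to the count of \emph{forks}, which is already bounded by Lemma~\ref{lem:trees-fork}. Throughout I use that, by Lemma~\ref{lem:rectilinear-trees}, we may assume $T$ has no degree-$2$ vertices, so every vertex is either a leaf or has degree $3$ or $4$.

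First I would introduce the \emph{core} of $T$, namely the subgraph $T'$ induced by the non-leaf vertices of $T$. Any vertex lying strictly between two non-leaves on a path of $T$ has degree at least $2$, and hence (no degree-$2$ vertices) at least $3$, so it is itself a non-leaf; thus the non-leaves induce a connected subgraph, i.e. $T'$ is a subtree of $T$. By definition a vertex $v$ is a splitter exactly when it has at least three non-leaf neighbours, that is, exactly when $\deg_{T'}(v)\ge 3$. Hence the splitters of $T$ are precisely the branch vertices (degree $\ge 3$) of $T'$, and it suffices to bound the number of branch vertices of $T'$.

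Next I would identify the leaves of $T'$. A leaf $v$ of $T'$ is a non-leaf of $T$ with exactly one non-leaf neighbour; since $\deg_T(v)\in\{3,4\}$, it has either two leaf neighbours (so $\deg_T(v)=3$ and $v$ is a $2$-fork) or three leaf neighbours (so $\deg_T(v)=4$ and $v$ is a $3$-fork). Thus every leaf of $T'$ is a $2$- or $3$-fork, and distinct leaves give distinct forks, so the number of leaves of $T'$ is at most the total number of forks in $T$. By Lemma~\ref{lem:trees-fork}, in each of the three admissible cases $T$ has at most four forks (four $2$-forks; or two $3$-forks; or one $3$-fork together with at most two $2$-forks). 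Therefore $T'$ has at most four leaves.

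Finally I would close with the standard handshake identity for the tree $T'$: writing $\lambda$ for its number of leaves and $b$ for its number of branch vertices, $\sum_{v}(\deg_{T'}(v)-2)=-2$ decomposes (every vertex of $T'$ has degree $1$, $2$, or $\ge 3$) as $-\lambda+\sum_{\text{branch }v}(\deg_{T'}(v)-2)=-2$, whence $\sum_{\text{branch }v}(\deg_{T'}(v)-2)=\lambda-2$ and so $b\le\lambda-2\le 2$, which is exactly the desired bound on splitters. The only points requiring care are the degenerate cases where $T'$ has at most one vertex (i.e. $T$ is a single edge or a star $K_{1,3}$ or $K_{1,4}$), for which there are trivially no splitters; the main—though modest—obstacle is verifying cleanly that the non-leaves form a subtree and that its leaves are exactly forks, after which the counting is immediate.
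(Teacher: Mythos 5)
Your proof is correct, and it takes a genuinely different route from the paper's. The paper argues by contradiction: assuming three splitters $v_1,v_2,v_3$, it first shows they may be taken to lie on a common path (otherwise the vertex $u$ at which they branch apart is itself a splitter and can replace $v_2$), then extracts two non-trivial subtrees hanging off each of $v_1$ and $v_3$ and one off $v_2$, all five pairwise disjoint, contradicting Corollary~\ref{cor:atmost4subtrees}; that corollary in turn rests on Lemma~\ref{lem:nontrivial-subtree} together with the four-fork bound of Lemma~\ref{lem:trees-fork}. You instead give a direct count: splitters of $T$ are exactly the branch vertices of the core $T'$ induced by the non-leaves, leaves of $T'$ are exactly $2$- or $3$-forks of $T$ (this is where the standing no-degree-$2$ assumption enters), and the handshake identity $\sum_{v}(\deg_{T'}(v)-2)=-2$ for the tree $T'$ gives, in your notation, $b\le\lambda-2\le 4-2=2$ once Lemma~\ref{lem:trees-fork} caps the number of forks at four. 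Both arguments bottom out in Lemma~\ref{lem:trees-fork}, but yours bypasses Lemma~\ref{lem:nontrivial-subtree}, Corollary~\ref{cor:atmost4subtrees}, and the path-arrangement step, replacing them with an elementary counting identity; as a bonus it shows that a turn-regular tree with two splitters must have at least four leaves in its core, hence by Lemma~\ref{lem:trees-fork} exactly four $2$-forks and no $3$-fork, which matches the double-windmill structure of Lemma~\ref{lem:2-splitters}. Your handling of the degenerate cases (core empty or a single vertex) is also correct, since such trees have no splitters at all.
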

\begin{proof}
Assume to the contrary that $T$ contains at least three splitters $v_1$, $v_2$ and $v_3$. We first claim that it is not a loss of generality to assume that $v_1$, $v_2$ and $v_3$ appear on a path in $T$. If this is not the case, then there is a vertex, say $u$, such that $v_1$, $v_2$ and $v_3$ lie in three distinct subtrees rooted at $u$. Hence, $u$ is a splitter that lies on the path from $v_1$ to $v_3$. If we choose $v_2$ to be $u$, the claim follows. 

Let $P$ be the path containing $v_1$, $v_2$ and $v_3$ in $T$, and assume w.l.o.g.\ that $v_1$ and $v_3$ are the two end-vertices of $P$. Since $v_1$ is a splitter, it is adjacent to at least three vertices that are not leaves and two of them do not belong to $P$. Let $T_1$ and $T_2$ be the subtrees of $T$ rooted at these two vertices, which by definition are non-trivial and do not contain $v_2$ and $v_3$. By a symmetric argument on $v_3$, we obtain two non-trivial subtrees $T_3$ and $T_4$ of $T$ that do not contain $v_1$ and $v_2$. The third splitter $v_2$ may have only one neighbor that is not a leaf and does not belong to $P$. The (non-trivial) subtree $T_5$ rooted at this vertex contains neither $v_1$ nor $v_3$. Hence, $T_1,\ldots,T_5$ contradict Corollary~\ref{cor:atmost4subtrees}.
\end{proof}

By Lemma~\ref{lem:max-splitters}, a turn-regular tree contains either zero  or one or two splitters (see Lemmas~\ref{lem:0-splitters}-\ref{lem:2-splitters}). A \emph{caterpillar} is a tree, whose leaves are within unit distance from a path, called \emph{spine}. For $k\in\{3,4\}$, a \emph{$k$-caterpillar} is a non-trivial caterpillar (i.e., not a single edge), whose spine vertices have degree at least $3$ and at most~$k$. 

\begin{figure}[t]
  \centering
  \subfigure[]{%
    \centering
    \includegraphics[page=1]{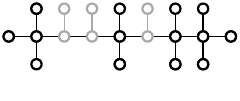}
    \label{fig:4-caterpillar}
  }
  \hfil
  \subfigure[]{%
    \centering
    \includegraphics[page=3]{figures/tree-deg4}
    \label{fig:4-windmill}
  }\hfil
  \subfigure[]{%
    \centering
    \includegraphics[page=4]{figures/tree-deg4}
    \label{fig:3-windmill}
  }\hfil
  \subfigure[]{%
    \centering
    \includegraphics[page=2]{figures/tree-deg4}
    \label{fig:double-windmill}
  }
  \caption{Illustration of
  (a) a $4$-caterpillar,
  (b) a $4$-windmill,
  (c) a $3$-windmill, and
  (d) a double-windmill. Possible extensions are highlighted in gray.}
  \label{fig:characterization}
\end{figure}

\begin{lemma}\label{lem:0-splitters}
A tree $T$ without splitters is a $4$-caterpillar and turn-regular.
\end{lemma}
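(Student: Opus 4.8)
The plan is to prove the two assertions in turn: first the purely structural claim that a splitter-free (non-trivial) tree $T$ is a $4$-caterpillar, and then that every such tree admits a bend-free turn-regular representation. Throughout I rely on the standing conventions of this section: by Lemma~\ref{lem:rectilinear-trees} we may assume $T$ has no degree-$2$ vertices, $T$ is a $4$-graph, and ``turn-regular'' means ``rectilinear turn-regular''; I also assume $T$ is non-trivial (otherwise it has no spine vertex and the statement is vacuous).

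For the structural part I would examine the subgraph $T_I$ induced by the internal (non-leaf) vertices of $T$. Since $T$ is a tree, $T_I$ is connected: any vertex lying strictly between two internal vertices on a path of $T$ has two path-neighbours, hence degree at least two, hence is itself internal. The splitter-freeness hypothesis states precisely that every vertex is adjacent to at most two internal vertices, so $T_I$ has maximum degree at most two; being a connected subtree, $T_I$ is a path $P=u_1,\dots,u_m$ (possibly a single vertex). Every leaf of $T$ is adjacent to an internal vertex, because its unique neighbour has degree $\ge 3$ ($T$ is non-trivial and degree-$2$-free), so all leaves lie within unit distance of $P$ and $T$ is a caterpillar with spine $P$. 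Finally every spine vertex, being internal and degree-$2$-free, has degree at least $3$, and at most $4$ since $T$ is a $4$-graph; hence $T$ is a $4$-caterpillar.

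For turn-regularity I would exhibit an explicit rectilinear representation. Place $u_1,\dots,u_m$ left to right on a horizontal line, so that every spine edge is horizontal, and at each spine vertex assign its leaf edges to the cardinal directions not used by its spine edges, using the vertical directions first (up, then down) and a free horizontal direction only for a third leaf of a degree-$4$ endpoint (leftward at $u_1$, rightward at $u_m$); when $m=1$ the single star $K_{1,3}$ or $K_{1,4}$ simply occupies three or four cardinal directions. By construction every leaf edge prolongs to an infinite ray leaving the drawing in its own cardinal direction without meeting any other part: vertical rays sit at pairwise distinct $x$-coordinates and stay strictly above or below the spine, while the at most two horizontal rays run along the spine line but strictly outside the spine's $x$-range. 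Thus the drawing is a planar orthogonal \emph{convex} drawing, and since turn-regular trees are exactly those admitting such drawings (as noted after Lemma~\ref{lem:rectilinear-trees}), $T$ is turn-regular.

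The step I expect to require the most care is making turn-regularity rigorous \emph{directly}, should the convex-drawing equivalence be considered too informal to invoke. By Property~\ref{prp:reflex-leaves} all reflex corners occur at leaves, each leaf contributing an ordered pair of reflex corners associated with its spike; intuitively no two ``point at each other'' because every leaf points strictly outward, but the honest argument must check that $\rot(c_i,c_j)\neq 2$ for every ordered pair of reflex corners in the single external face. The delicate points are the endpoints, where a horizontal leaf produces reflex corners that interact with the vertical ones, and the degenerate case $m=1$; in each case one tracks the prefix sums of the turn sequence around the external face and shows they never realise the difference that yields $\rot=2$ between two reflex corners. One must also guard against three consecutive convex corners on the external face, which by Corollary~\ref{cor:outer-face-three-convex} would already preclude turn-regularity; here the reflex spikes that separate the convex corners of adjacent spine vertices guarantee that at most two convex corners ever occur consecutively, so Corollary~\ref{cor:outer-face-three-convex} is not triggered.
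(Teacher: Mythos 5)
Your proof is correct and follows essentially the same route as the paper: the paper's (two-sentence) proof likewise observes that in the absence of splitters the internal vertices form a path, so $T$ is a $4$-caterpillar, and then settles turn-regularity by pointing to the standard drawing in Fig.~\ref{fig:4-caterpillar}, which is exactly the spine-on-a-horizontal-line construction you describe. You simply make explicit the structural argument and the convexity/turn-regularity of that drawing, details the paper delegates to the figure and to the convex-drawing equivalence stated after Lemma~\ref{lem:rectilinear-trees}.
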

\begin{proof}
In the absence of splitters in $T$, all inner vertices of $T$ form a path. Hence, $T$ is a $4$-caterpillar and thus turn-regular; see Fig.~\ref{fig:4-caterpillar}. 
\end{proof}

A tree with one splitter $v$ is
\begin{inparaenum}[(i)]
\item a \emph{$4$-windmill}, if $v$ is the root of four $3$-caterpillars (Fig.~\ref{fig:4-windmill}), 
\item a \emph{$3$-windmill}, if $v$ is the root of two $3$-caterpillars and one $4$-caterpillar (Fig.~\ref{fig:3-windmill}).
\end{inparaenum}
Note that in the latter case, $v$ can be adjacent to a leaf if it has degree four.
%
The operation of \emph{pruning} a rooted tree $T$ \emph{at} a degree-$k$ vertex $v$ with $k \in \{3,4\}$ that is not the root of $T$, removes the $k-1$ subtrees of $T$ rooted at the children of $v$ without removing these children, and yields a new subtree $T'$ of $T$, in which $v$ and its children form a $(k-1)$-fork~in~$T'$.

\begin{restatable}{lemma}{lemonesplitter}\label{lem:1-splitter}
A tree $T$ with one splitter is turn-regular if and only if it is a $3$- or $4$-windmill.
\end{restatable}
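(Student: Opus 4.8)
The plan is to prove both implications, building on the structural fact that a one-splitter tree is a \emph{spider of caterpillars}. First I would fix this structure. Since $T$ has no degree-$2$ vertices and $v$ is its only splitter, the subgraph $T^*$ induced by the non-leaf vertices is a tree in which $v$ has degree $d := \deg_{T^*}(v) \in \{3,4\}$ while every other vertex has degree at most $2$; hence $T^*$ is a spider, i.e., $v$ together with $d$ internally disjoint paths. Re-attaching the pendant leaves turns each path into a caterpillar branch $B_1,\dots,B_d$, and $v$ carries $\deg_T(v)-d \in \{0,1\}$ additional pendant leaves. A $k$-windmill is then exactly the case $d=k$ together with a bound on the branch types, so the statement reduces to controlling how many of the $B_i$ may be \emph{genuine} $4$-caterpillars (i.e., contain a degree-$4$ spine vertex).

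For the direction $(\Leftarrow)$ I would exhibit the drawings. I draw $v$ in the center and route each branch monotonically away from it in one of the four axis directions, in a pinwheel fashion as in Fig.~\ref{fig:4-windmill} and Fig.~\ref{fig:3-windmill}: along a $3$-caterpillar branch the pendant leaf of each interior spine vertex points toward the next axis direction (and the far endpoint carries its two leaves outward), while along the $4$-caterpillar branch of a $3$-windmill the two pendant leaves of each degree-$4$ spine vertex point to its two sides. In each case the leaf-edges extend to pairwise non-crossing axis-parallel rays, so the drawing is convex and hence turn-regular by the remark following Lemma~\ref{lem:rectilinear-trees}.

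For the direction $(\Rightarrow)$ I would argue by fork counting, reading off the forks of $T$ from its spider structure: every branch contributes exactly one fork, located at its far endpoint (a $2$-fork if that endpoint has degree $3$, a $3$-fork if it has degree $4$), whereas $v$ and all other spine vertices are not forks. Thus $T$ has exactly $d$ forks. If a branch $B_i$ contains an \emph{internal} degree-$4$ spine vertex $w$, I would \emph{prune} $T$ at $w$: by the definition of pruning, $w$ becomes a $3$-fork of the resulting subtree $T'$, and by Lemma~\ref{lem:subtrees} $T'$ is still turn-regular. Crucially, since the degree-$4$ vertices of a single branch lie on a common path (hence are pairwise ancestor–descendant), pruning can expose at most one new $3$-fork per branch. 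Pruning one degree-$4$ vertex in each branch that has one therefore yields a turn-regular subtree with $k$ $3$-forks and $d-k$ $2$-forks, where $k$ is the number of genuine $4$-caterpillar branches. Confronting these counts with the admissible combinations of Lemma~\ref{lem:trees-fork} forces $k=0$ when $d=4$ (so $T$ is a $4$-windmill) and $k\le 1$ when $d=3$ (so $T$ is a $3$-windmill).

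The step I expect to be the main obstacle is precisely this last reduction, because the fork count of $T$ itself only sees the \emph{endpoint} degrees of the branches and is blind to internal degree-$4$ spine vertices. The pruning operation together with the hereditary property of Lemma~\ref{lem:subtrees} is what converts such hidden degree-$4$ vertices into visible $3$-forks, and the observation that each branch is a path (so it contributes at most one prunable $3$-fork) is exactly what makes the fork counts match the windmill thresholds. A secondary care point is the degenerate reading of a $3$-windmill whose three branches are all $3$-caterpillars ($k=0$), which must be admitted as a windmill for the equivalence to hold.
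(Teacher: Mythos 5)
Your proposal is correct and follows essentially the same route as the paper's proof: the forward direction combines pruning at degree-$4$ spine vertices with the heredity of turn-regularity (Lemma~\ref{lem:subtrees}) and the fork bounds of Lemma~\ref{lem:trees-fork}, and the backward direction exhibits the windmill drawings. The only (cosmetic) difference is that you prune every offending branch and compare the resulting count of $k$ $3$-forks and $d-k$ $2$-forks against Lemma~\ref{lem:trees-fork} in one shot, whereas the paper prunes just enough vertices to reach a contradiction, handling the cases $d=4$ and $d=3$ separately.
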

\begin{sketch}
Every $3$- or $4$-windmill is turn-regular; see Figs.~\ref{fig:4-windmill}-\ref{fig:3-windmill}. Now, let $u$ be the splitter of $T$. If $u$ has four non-leaf neighbors, then $u$ is the root of four non-trivial subtrees $T_1,\ldots,T_4$, which by Lemma~\ref{lem:0-splitters} are $4$-caterpillars. We claim that none of them has a degree-$4$ vertex. Assume to the contrary that $T_1$ contains such a vertex $v \neq u$. We root $T$ at $u$ and prune at $v$, resulting in a (turn-regular, by Lemma~\ref{lem:subtrees}) subtree $T'$ of $T$ that contains a $3$-fork at $v$. By Lemma~\ref{lem:nontrivial-subtree}, each of the non-trivial trees $T_2,\ldots,T_4$ contains a fork. By Lemma~\ref{lem:trees-fork}, these three forks together with the $3$-fork formed at $v$ contradict the turn-regularity of $T'$. The case in which $u$ has three non-leaf neighbors
\ifArxiv
can be found in Appendix~\ref{app:trees}.
\else
can be found in~\cite{bbddgkpr-tror-20-arxiv}.
\fi
\end{sketch}

A tree $T$ with exactly two splitters $u$ and $v$ is a \emph{double-windmill} if 
\begin{inparaenum}[(i)]
\item \label{p:1} the path from $u$ to $v$ in~$T$ forms the spine of a $4$-caterpillar in $T$, 
\item \label{p:2} each of $u$ and $v$ is the root of exactly three non-trivial subtrees, and
\item \label{p:3} the two non-trivial subtrees rooted at $u$ ($v$) that do not contain $v$ ($u$) are $3$-caterpillars; see Fig.~\ref{fig:double-windmill}.
\end{inparaenum}
The proof of the next lemma is similar to the one of Lemma~\ref{lem:1-splitter};
\ifArxiv
see Appendix~\ref{app:trees}.
\else
see~\cite{bbddgkpr-tror-20-arxiv}.
\fi

\begin{restatable}{lemma}{lemtwosplitters}\label{lem:2-splitters}
A tree $T$ with two splitters is turn-regular if and only if it is a double-windmill.
\end{restatable}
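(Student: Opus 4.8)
The statement has two directions. The reverse implication (that every double-windmill is turn-regular) should be handled by exhibiting an explicit rectilinear drawing, as suggested by Fig.~\ref{fig:double-windmill}: place the spine of the $4$-caterpillar horizontally between $u$ and $v$, and route the two $3$-caterpillars at each of $u$ and $v$ so that one points \lupward and one points \ldownward, with all their leaf edges directed away from the spine. This is a local, constructive argument, and I expect it to follow the same pattern as the windmill drawings in Figs.~\ref{fig:4-windmill}--\ref{fig:3-windmill}, so the main content of the proof is the forward direction.

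\textbf{Forward direction via counting forks and splitters.} Assume $T$ is turn-regular with exactly two splitters $u$ and $v$; I want to show $T$ is a double-windmill, i.e., that conditions~(\ref{p:1})--(\ref{p:3}) hold. The strategy mirrors the proof of Lemma~\ref{lem:1-splitter}: use pruning together with Lemma~\ref{lem:subtrees} (heredity) and the fork-counting bound of Lemma~\ref{lem:trees-fork} to rule out any forbidden local configuration. First I would use the argument of Lemma~\ref{lem:max-splitters} to observe that $u$ and $v$ lie on a path and that all inner vertices of $T$ not separated by $u$ and $v$ form the spine structure. To establish~(\ref{p:2}), note that each splitter is adjacent to at least three non-leaf vertices by definition; if $u$ had a fourth non-leaf neighbor it would root an extra non-trivial subtree, and combining the forks arising from that subtree, from $v$'s subtrees, and from the spine would exceed the budget of Lemma~\ref{lem:trees-fork} (via Corollary~\ref{cor:atmost4subtrees}). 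Hence each of $u,v$ roots exactly three non-trivial subtrees.

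\textbf{Ruling out high-degree vertices in the pendant caterpillars.} For condition~(\ref{p:3}), I would show that the two non-trivial subtrees hanging off $u$ (and symmetrically off $v$) that do not contain the other splitter must be $3$-caterpillars. By Lemma~\ref{lem:0-splitters} each such subtree is a $4$-caterpillar (it contains no splitter, since $u$ and $v$ are the only splitters and these subtrees avoid both). It remains to exclude a degree-$4$ spine vertex inside them. As in Lemma~\ref{lem:1-splitter}, I root $T$ appropriately and prune at such a hypothetical degree-$4$ vertex $w$ to create a $3$-fork at $w$; by Lemma~\ref{lem:nontrivial-subtree} each of the remaining three non-trivial subtrees (the other two pendant caterpillars and the spine path toward the second splitter) contributes at least one further fork, and the resulting combination of one $3$-fork with three additional forks violates Lemma~\ref{lem:trees-fork}. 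Finally, condition~(\ref{p:1}) follows because the path from $u$ to $v$ carries no splitter in its interior (else we would have a third splitter), so its interior vertices have degree at most four with all off-path neighbors being leaves, making the $u$--$v$ path the spine of a $4$-caterpillar.

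\textbf{Main obstacle.} The delicate point will be bookkeeping the forks precisely: pruning at an internal degree-$4$ vertex produces a $3$-fork, and I must verify that the three other mandatory forks are genuinely \emph{distinct} and of the right type (combinations of $2$- and $3$-forks) so that one of the three disallowed counts in Lemma~\ref{lem:trees-fork} is actually triggered, rather than landing in an allowed case. In particular the case analysis must carefully separate whether the extra forks are $2$-forks or $3$-forks, since Lemma~\ref{lem:trees-fork} permits, e.g., one $3$-fork with two $2$-forks. Keeping this accounting tight across the symmetric roles of $u$ and $v$ is where the proof of Lemma~\ref{lem:1-splitter} should be followed most closely.
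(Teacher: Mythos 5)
Your proposal is correct and follows essentially the same route as the paper's proof: the easy direction via the explicit drawing of Fig.~\ref{fig:double-windmill}; Property~(\ref{p:1}) from the observation that interior vertices of the $u$--$v$ path cannot be splitters, so their off-path neighbors are leaves; Property~(\ref{p:2}) by exhibiting five disjoint non-trivial subtrees (the extra subtree at $u$ plus two rooted at $v$) against Corollary~\ref{cor:atmost4subtrees}; and Property~(\ref{p:3}) by pruning at a hypothetical degree-$4$ vertex and invoking Lemmas~\ref{lem:subtrees}, \ref{lem:nontrivial-subtree} and~\ref{lem:trees-fork}. The fork-type bookkeeping you flag as the main obstacle resolves itself, since one $3$-fork together with three further forks of any types violates every case of Lemma~\ref{lem:trees-fork}, so no delicate case separation is actually needed.
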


\noindent Lemmas~\ref{lem:0-splitters}-\ref{lem:2-splitters} imply the next theorem. Note that for the recognition,~one can test if a (sub-)tree is a $3$- or a $4$-caterpillar in linear time (for details,
\ifArxiv
see~Appendix~\ref{app:trees}).
\else
see~\cite{bbddgkpr-tror-20-arxiv}).
\fi
\begin{restatable}{theorem}{windmills}\label{thml:windmills}
A tree $T$ is turn-regular if and only if $\smooth{(T)}$ is
  \begin{inparaenum}[(i)]
  \item a $4$-caterpillar, or
  \item a $3$- or a $4$-windmill, or
  \item a double-windmill.
  \end{inparaenum}
Moreover, recognition and drawing can be done in linear time. 
\end{restatable}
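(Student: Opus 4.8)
The plan is to assemble the theorem directly from the structural lemmas already proved, via a case analysis on the number of splitters. By Lemma~\ref{lem:rectilinear-trees} we may replace $T$ with $\smooth(T)$ and, since we agreed to call a tree turn-regular exactly when its smoothed version is rectilinear turn-regular, it suffices to prove the equivalence for a tree without degree-$2$ vertices. Lemma~\ref{lem:max-splitters} guarantees that a turn-regular tree has at most two splitters, so a turn-regular $\smooth(T)$ has either zero, one, or two splitters, and no other case can occur.

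First I would treat the three cases separately. If $\smooth(T)$ has no splitter, Lemma~\ref{lem:0-splitters} says it is a $4$-caterpillar and is turn-regular, matching case~(i). If it has exactly one splitter, Lemma~\ref{lem:1-splitter} says it is turn-regular if and only if it is a $3$- or $4$-windmill, matching case~(ii). If it has exactly two splitters, Lemma~\ref{lem:2-splitters} says it is turn-regular if and only if it is a double-windmill, matching case~(iii). For the forward direction, a turn-regular $\smooth(T)$ has at most two splitters and therefore falls into one of these three cases, so it is one of the listed structures. For the converse, each listed structure is turn-regular by the corresponding lemma (each admitting the explicit convex rectilinear drawing sketched in Fig.~\ref{fig:characterization}). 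This establishes the characterization.

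For the ``moreover'' part, I would describe a linear-time recognition and drawing procedure. Computing $\smooth(T)$ takes $O(n)$ time by contracting all degree-$2$ vertices. Next, by a single pass over the vertices one computes each vertex's degree and its number of non-leaf neighbors, and thereby counts the splitters. Based on this count one checks the corresponding structural condition: whether the inner vertices form a path (a $4$-caterpillar), whether the splitter is the root of the required number of $3$- and $4$-caterpillars (a $3$- or $4$-windmill), or whether the two splitters satisfy the three defining conditions of a double-windmill. Testing whether a subtree is a $3$- or $4$-caterpillar is linear, and there are only a constant number of such tests, so the total running time is $O(n)$. In the positive case, one outputs coordinates following the layouts of Fig.~\ref{fig:characterization}: place the spine of each caterpillar on a horizontal (or vertical) line, attach the incident leaves above/below (or left/right), and arrange the caterpillars around the splitter(s); assigning these coordinates is a single linear-time traversal.

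The main obstacle is less in the theorem itself---which is essentially a bookkeeping of the preceding lemmas---and more in being careful that the recognition step rejects every non-listed tree. The key point is that Lemma~\ref{lem:max-splitters} forbids three or more splitters, so the algorithm may safely reject immediately upon finding a third splitter; and within each splitter count, Lemmas~\ref{lem:1-splitter} and~\ref{lem:2-splitters} supply the exact windmill and double-windmill conditions to verify. Ensuring the drawing is genuinely planar, rectilinear, and convex (leaf edges extendable to crossing-free rays) requires following the figure layouts precisely, but offers no combinatorial difficulty beyond what the lemmas already provide.
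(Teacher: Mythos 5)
Your proposal is correct and follows essentially the same route as the paper: both reduce to $\smooth(T)$ via Lemma~\ref{lem:rectilinear-trees}, invoke Lemma~\ref{lem:max-splitters} to bound the number of splitters by two, dispatch the three cases to Lemmas~\ref{lem:0-splitters}, \ref{lem:1-splitter} and~\ref{lem:2-splitters}, and obtain linear-time recognition by counting splitters and running a constant number of linear-time $3$-/$4$-caterpillar tests, with the drawing read off from the layouts in Fig.~\ref{fig:characterization}. The only cosmetic difference is that the paper spells out the two-splitter check slightly more explicitly (the two subtrees at each splitter not containing the other must be $3$-caterpillars and any third neighbor a leaf), which your verification step subsumes.
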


\section{Turn-Regular Rectilinear Representations }\label{sec:rect}
Here we focus on rectilinear planar representations and prove the following.

\begin{restatable}{theorem}{thsmallfaces}\label{th:small-faces}
  Let $G$ be an $n$-vertex biconnected plane graph with faces of degree at most~eight. There exists an $O(n^{1.5})$-time algorithm that decides whether $G$ admits an embedding-preserving turn-regular rectilinear representation and that computes such a representation in the positive case. 
\end{restatable}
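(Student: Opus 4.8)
The plan is to reduce the problem to a flow computation on a network of linear size with small integer capacities, solvable in $O(n^{1.5})$ time. I first record the structure of rectilinear representations. Since $G$ is biconnected and admits a rectilinear representation only if its maximum degree is at most four, every vertex has degree in $\{2,3,4\}$, and in any rectilinear representation the angles around a vertex are multiples of $90^\circ$ summing to $360^\circ$. Hence a degree-$4$ vertex has four convex corners, a degree-$3$ vertex has two convex corners and exactly one flat corner, and a degree-$2$ vertex is either \emph{flat} (two flat corners) or \emph{bent} (one convex and one reflex corner). In particular, \emph{every reflex corner of $\rect H$ is incident to a bent degree-$2$ vertex}. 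The only degrees of freedom are therefore: at each degree-$3$ vertex, which of its three incident faces receives the flat corner; and at each degree-$2$ vertex, whether it is flat or, if bent, which of its two incident faces receives the reflex corner.

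Next I would exploit the degree bound on faces. By Property~\ref{pr:rot-1}, an internal face $\rect f$ of degree $d\le 8$ satisfies $\sum_c \turn(c)=4$; writing $r$, $k$, $t$ for its numbers of reflex, convex, and flat corners, we get $k-r=4$ and $k+r+t=d$, so $r\le (d-4)/2\le 2$. Thus each internal face has at most two reflex corners, and the external face (also of degree at most eight, with turn sum $-4$) has a constant number of reflex corners. A face with at most one reflex corner is automatically turn-regular; a face with exactly two reflex corners $c_1,c_2$ is turn-regular iff they are not kitty corners, i.e.\ $\rot(c_1,c_2)\ne 2$, which by Properties~\ref{pr:rot-1} and~\ref{pr:rot-2} is determined by the cyclic positions of $c_1,c_2$ and the convex/flat corners between them. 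Crucially, because every face has only $O(1)$ corners, it admits only $O(1)$ boundary \emph{angle patterns} compatible with the turn-sum constraint, and we can enumerate in constant time per face the subset $P_f$ of patterns that are turn-regular.

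I would then encode the global consistency as a flow problem. Each turn-regular pattern of a face prescribes, for every incident corner, whether that corner is convex, flat, or reflex; a rectilinear representation is turn-regular iff one can pick a pattern $p_f\in P_f$ per face so that the prescribed corner angles agree at every vertex (i.e.\ sum to $360^\circ$ around each vertex). The coupling is local: a degree-$2$ vertex links the two patterns of its incident faces (its reflex/convex/flat assignments on the two sides must be complementary), and a degree-$3$ vertex links three faces (exactly one must take the flat corner). I would build a network with a gadget of constant size for each face that admits precisely the patterns in $P_f$, attach the vertex constraints as conservation/assignment arcs carrying the flat and reflex ``tokens'', and route a flow of value equal to the total turn deficit. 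A feasible integral flow corresponds exactly to an embedding-preserving turn-regular rectilinear representation, so the same flow simultaneously decides existence and turn-regularity. The network has $O(n)$ nodes and edges and all capacities are bounded by a constant, so after splitting into unit-capacity arcs the maximum flow value is $O(n)$ and an Even--Tarjan augmenting-path algorithm computes it in $O(\sqrt{n}\cdot n)=O(n^{1.5})$ time; reading off the selected patterns yields the representation.

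The main obstacle is the gadget design in the last step: turn-regularity is \emph{not} a purely local property of which corners are reflex, since $\rot(c_1,c_2)$ between two reflex corners of a face depends on all the convex and flat corners lying between them along the boundary. The gadget must therefore select full boundary patterns rather than merely counting reflex corners, and I must verify that forbidding the kitty configuration — together with the external face, whose turn-regularity may involve several reflex pairs (handled exhaustively since its degree is bounded) — can still be expressed through constant-capacity arcs so that the composed network remains solvable by a unit-capacity flow within the claimed bound. A secondary technical point is to confirm that every integral feasible flow indeed yields an embedding-preserving representation with the correct turn sums, including $-4$ on the external face; this follows from the correspondence with Tamassia's flow model once the gadgets are shown to realize exactly the turn-regular patterns.
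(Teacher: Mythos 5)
Your overall strategy---constrain Tamassia's flow network so that its feasible flows are exactly the bend-free, turn-regular representations, then solve a sparse unit-capacity max-flow instance in $O(n^{1.5})$ time---is the same as the paper's, and your structural analysis is sound (reflex corners only at degree-2 vertices, at most $\lfloor(d-4)/2\rfloor$ reflex corners in an internal face of degree $d$, hence at most two when $d=8$ and at most one when $d\le 7$). But the proof has a genuine gap exactly where you flag it: you never construct the face gadget, and the generic formulation you fall back on---each face selects a full boundary ``pattern'' from $P_f$, coupled by angle-consistency constraints at the vertices---is a constraint-satisfaction problem that is not, in general, expressible as a max-flow instance (this kind of non-local coupling is the very reason compaction-type problems become NP-hard). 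Asserting that ``a feasible integral flow corresponds exactly to an embedding-preserving turn-regular rectilinear representation'' is precisely the statement that needs proof. The paper closes this gap with one specific observation that makes the constraint local and trivially flow-expressible: in an internal face $f$ of degree exactly eight, the turn sum $4$ forces that if two corners are reflex then all six remaining corners are convex, and such a pair is a kitty pair if and only if the two vertices are \emph{opposite} on the boundary of $f$ (three vertices on each side). Hence turn-regularity of $f$ is equivalent to the purely local condition that no opposite pair $\{u,v\}$ simultaneously has $270^\circ$ angles inside $f$, which is enforced by rerouting the arcs $(\nu_u,\nu_f)$ and $(\nu_v,\nu_f)$ through an intermediate node $\nu_{uv}$ whose arc into $\nu_f$ has capacity $5$: two reflex corners would need $3+3=6$ units, while any turn-regular assignment pushes at most $3+2=5$. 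No pattern enumeration or bespoke gadgetry is needed.

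A secondary inaccuracy is your treatment of the external face: you propose to handle its ``several reflex pairs'' exhaustively, but in fact nothing needs to be done there. For the external face the rotation is $-4$, so a kitty pair $(c_1,c_2)$ with $\rot(c_1,c_2)=2$ forces $\rot(c_2,c_1)=-6$; the first condition requires at least three corners strictly between $c_1$ and $c_2$ on one side and the second at least five reflex corners on the other, so the face must have degree at least ten. An external face of degree at most eight therefore can never contain kitty corners, and the flow network needs no modification for it at all.
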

\begin{proof}
We describe a testing algorithm based on a constrained version of Tamassia's flow network $N(G)$, which models the space of orthogonal representations of $G$ within its given planar embedding~\cite{DBLP:journals/siamcomp/Tamassia87}.
Let $V$, $E$, and $F$ be the set of vertices, edges, and faces of $G$, respectively. Tamassia's flow network $N(G)$ is a directed multigraph having a \emph{vertex-node} $\nu_v$ for each vertex $v \in V$ and a \emph{face-node} $\nu_f$ for each face $f \in F$. $N(G)$ has two types of edges: $(i)$ for each vertex $v$ of a face $f$, there is a directed edge $(\nu_v,\nu_f)$ with capacity~3; $(ii)$ for each edge $e \in E$, denoted by $f$ and $g$ the two faces incident to $e$, there is a directed edge $(\nu_f,\nu_g)$ and a directed edge $(\nu_g,\nu_f)$, both with infinite capacity.

\begin{figure}[tb]
	\centering
	\subfigure[]{%
		\centering
		\includegraphics[page=1,width=0.37\textwidth]{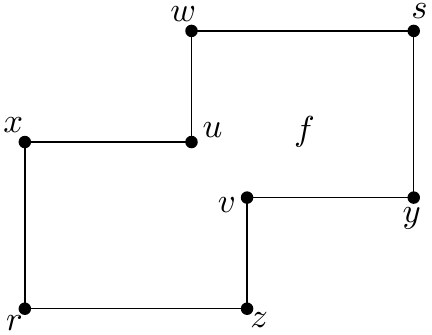}
		\label{fig:small-faces-1}
	}\hfil
	\subfigure[]{%
		\centering
		\includegraphics[page=2,width=0.4\textwidth]{figures/small-faces.pdf}
		\label{fig:small-faces-2}
	}
	\caption{(a) A pair of kitty corners in a face of degree eight. (b) The modification of the flow network around a face-node corresponding to an internal face. The labels on the directed edge represent capacities.}
	\label{fig:small-faces}
\end{figure}

A feasible flow on $N(G)$ corresponds to an orthogonal representation of $G$: a flow value $k \in \{1,2,3\}$ on an edge $(\nu_v,\nu_f)$ represents an angle of $90 \cdot k$ degrees at $v$ in $f$ (since $G$ is biconnected, there is no angle larger than $270^\circ$ at a vertex); a flow value $k \geq 0$ on an edge $(\nu_f,\nu_g)$ represents $k$ bends on the edge of $G$ associated with $(\nu_f,\nu_g)$, and all these bends form an angle of $90^\circ$ inside $f$. 
Hence, each vertex-node $\nu_v$ supplies 4 units of flow in $N(G)$, and each face-node $\nu_f$ in $N(G)$ demands an amount of flow equal to $c_f = (2\deg(f)-4)$ if $f$ is internal and to $c_f = (2\deg(f)+4)$ if $f$ is external. The value $c_f$ represents the \emph{capacity of} $f$.  It is proved in~\cite{DBLP:journals/siamcomp/Tamassia87} that the total flow supplied by the vertex-nodes equals the total flow demanded by the face-nodes; if a face-node $\nu_f$ cannot consume all the flow supplied by its adjacent vertex-nodes (because its capacity $c_f$ is smaller), it can send the exceeding flow to an adjacent face-node $\nu_g$, through an edge $(\nu_f,\nu_g)$, thus originating bends.  

Our algorithm has to test the existence of an orthogonal representation $H$ such that: (a) $H$ has no bend; (b) $H$ is turn-regular. To this aim, we suitably modify $N(G)$ so that the possible feasible flows only model the set of orthogonal representations that verify Properties~(a) and~(b). To enforce Property~(a), we just remove from $N(G)$ the edges between face-nodes. To enforce Property~(b), we enhance $N(G)$ with additional nodes and edges. Consider first an internal face $f$ of $G$. By hypothesis $\deg(f) \leq 8$. It is immediate to see that if $\deg(f) \leq 7$ then $f$ cannot have a pair of kitty corners. If $\deg(f)=8$, a pair $\{u,v\}$ of kitty corners necessarily requires three vertices along the boundary of $f$ going from $u$ to $v$ (and hence also from $v$ to $u$); see Fig.~\ref{fig:small-faces-1}. Therefore, for such a face $f$, we locally modify $N(G)$ around $\nu_f$ as shown in Fig.~\ref{fig:small-faces-2}. Namely, for each potential pair $\{u,v\}$ of kitty corners, we introduce an intermediate node $\nu_{uv}$; the original edges $(\nu_u,\nu_f)$ and $(\nu_v,\nu_f)$ are replaced by the edges $(\nu_u,\nu_{uv})$ and $(\nu_v,\nu_{uv})$, respectively (each still having capacity 3); finally, an edge $(\nu_{uv},\nu_f)$ with capacity $5$ is inserted, which avoids that $u$ and $v$ form a reflex corner inside $f$ at the same time.  
For the external face $f$, it can be easily seen that a pair of kitty corners is possible only if the face has degree at least 10. Since we are assuming that $\deg(f) \leq 8$, we do not need to apply any local modification to $N(G)$ for the external face.

Hence, a rectilinear turn-regular representation of $G$ corresponds to a feasible flow in the modified version of $N(G)$. Since $N(G)$ can be easily transformed into a sparse unit capacity network, this problem can be solved in $O(n^{1.5})$ time by applying a maximum flow algorithm (the value of the maximum flow must be equal to $4|V|$)~\cite{DBLP:books/daglib/0069809}.
\end{proof}

\section{Open Problems}\label{sec:conclusions}

Our work raises several open problems. 
%
\begin{inparaenum}[(i)]
\item A natural question is if all biconnected planar 4-graphs are turn-regular (not only internally). 
\item While we suspect the existence of non-turn regular biconnected planar 4-graphs, we conjecture that triconnected planar 4-graphs are turn-regular.
\item It would be interesting to extend the result of Theorem~\ref{th:small-faces} to more general classes of plane graphs.
\end{inparaenum}


\bibliographystyle{splncs03}
\bibliography{references}

\ifArxiv

\newpage
\appendix
\section*{Appendix}

\section{Additional Material for Section~\ref*{sec:preliminaries}}
\label{app:preliminaries}
\myparagraph{Planar Graphs and Embeddings.} A \emph{$k$-graph} is a graph with vertex-degree at most $k$. We denote by $\deg(v)$ the degree of a vertex $v$. A \emph{plane graph} is a planar graph with a given planar embedding. Let $G$ be a plane graph and let $f$ be a face of $G$. We always assume that the boundary of $f$ is traversed counterclockwise, if $f$ is an internal face, and clockwise, if $f$ is the external face. Note that if $G$ is not biconnected, an edge may occur twice and a vertex may occur multiple times on the boundary of $f$. The total number of vertices (or edges), counted with their multiplicity, is called the \emph{degree of $f$} and is denoted as $\deg(f)$. 

\myparagraph{Orthogonal Drawings and Representations.}
Let $G$ be a planar 4-graph. A planar \emph{orthogonal drawing} $\Gamma$ of $G$ is a planar drawing of $G$ that represents~each vertex as a point and each edge as an alternating sequence of horizontal and vertical segments between its end-vertices. A \emph{bend} in $\Gamma$ is a point of an edge, in which a horizontal and a vertical segment meet. Informally speaking, an orthogonal representation of $G$ is an equivalence class of orthogonal drawings of $G$ having the same planar embedding and the same ``shape'', i.e., the same sequences of angles around the vertices ,and of bends along the edges. 

More formally, if $G$ is plane, and $e_1$ and $e_2$ are two (possibly coincident) edges incident to a vertex $v$ of $G$ that are consecutive in the clockwise order around $v$, we say that $a = \langle e_1,v,e_2 \rangle$ is an \emph{angle at $v$} of $G$ or simply an \emph{angle} of $G$. Let $\Gamma$ and $\Gamma'$ be two embedding-preserving orthogonal drawings of $G$. We say that $\Gamma$ and $\Gamma'$ are \emph{equivalent} if: 
\begin{inparaenum}[(i)]
\item for any angle $a$ of $G$, the geometric angle corresponding to $a$ is the same in $\Gamma$ and $\Gamma'$, and 
\item for any edge $e=(u,v)$ of $G$, the sequence of left and right bends along $e$ moving from $u$ to $v$ is the same in $\Gamma$ and in $\Gamma'$. 
\end{inparaenum}
An \emph{orthogonal representation} $H$ of $G$ is a class of equivalent orthogonal drawings of $G$. Representation $H$ is completely described by the embedding of $G$, by the value $\alpha \in \{90^\circ,180^\circ,270^\circ,360^\circ\}$ for each angle $a$ of $G$ ($\alpha$ defines the geometric angle associated with $a$), and by the ordered sequence of left and right bends along each edge $(u,v)$, moving from $u$ to $v$; if we move from $v$ to $u$ this sequence and the direction (left/right) of each bend are reversed.
An orthogonal representation without bends is also called \emph{rectilinear}. 

W.l.o.g., from now on we assume that an orthogonal representation $H$ comes with a given orientation, i.e., we shall assume that for each edge segment $\rect{pq}$ of $H$ (where $p$ and $q$ correspond to vertices or bends), it is fixed if $p$ is to the left, to the right, above, or below $q$ in every orthogonal drawing that preserves $H$.

\kittycorners*

\begin{proof}
  Let~$\pi$ denote the path from $c_1$ to $c_2$ in a clockwise
  traversal of the external face.  First assume that~$c_1$ is not a
  reflex corner and~$\rot(c_1,c_2) \ge 3$.  Let~$c$ be the corner that
  precedes~$c_1$.  If~$c$ is not a reflex corner,
  then~$\rot(c,c_2) = \rot(c_1,c_2) + \turn(c) \ge 3$, and if~$c$ is a reflex corner,
  then~$\rot(c,c_2) = \rot(c_1,c_2) - 1 \ge 2$.  We can hence iteratively expand the
  path~$\pi$ by adding preceding corners until we find a reflex
  corner~$c_1$ such that~$\rot(c_1,c_2) \ge 2$.
  If~$c_2$ is a reflex corner, and $\rot(c_1,c_2) = 2$, we have kitty
  corners by definition.  If~$\rot(c_1,c_2) > 2$, then, as
  $\rot(c_1,c_1) = -4$ (by Property~\ref{pr:rot-1}) and we only reduce
  the rotation value at reflex corners, we can keep extending~$\pi$
  until we find a pair of corners~$(c_1,c_2')$ with
  $\rot(c_1,c_2') = 2$ such that $c_2'$ is a reflex corner.
  Similarly, if $\rot(c_1,c_2) = 2$ but $c_2$ is not a reflex corner,
  we can extend~$\pi$ to the next reflex corner $c_2'$ on the external
  face with $\rot(c_1,c_2') = 2$.  Then,~$(c_1,c_2')$ is the claimed
  pair of kitty corners.
\end{proof}


\section{Full Proofs for Section~\ref*{sec:turn-regular-graphs}}
\label{app:turn-regular}

\begin{figure}[t!]
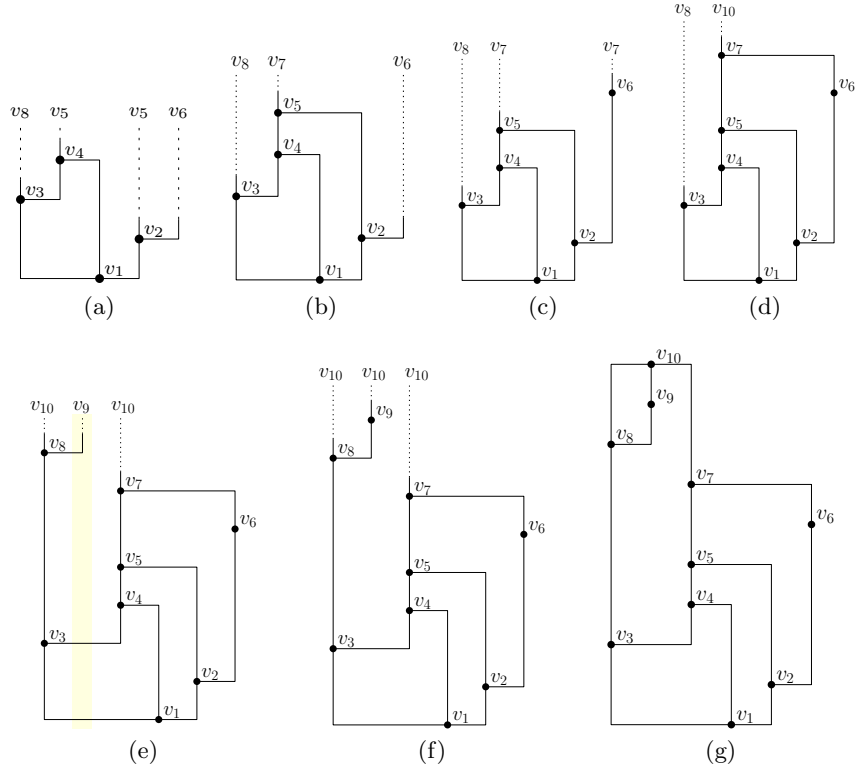

  \centering
  \subfigure[]{%
    \centering
    \includegraphics[page=5,width=0.2\textwidth]{figures/biconn-deg3.pdf}
    \label{fig:whole-example-biconn-deg3-v4}
  }\hfil
  \subfigure[]{%
    \centering
    \includegraphics[page=6,width=0.2\textwidth]{figures/biconn-deg3.pdf}
    \label{fig:whole-example-biconn-deg3-v5}
  }\hfil
  \subfigure[]{%
    \centering
    \includegraphics[page=7,width=0.2\textwidth]{figures/biconn-deg3.pdf}
    \label{fig:whole-example-biconn-deg3-v6}
  }\hfil
  \subfigure[]{%
    \centering
    \includegraphics[page=8,width=0.2\textwidth]{figures/biconn-deg3.pdf}
    \label{fig:whole-example-biconn-deg3-v7}
  }\hfil
  \subfigure[]{%
    \centering
    \includegraphics[page=9,width=0.25\textwidth]{figures/biconn-deg3.pdf}
    \label{fig:whole-example-biconn-deg3-v8}
  }\hfil
  \subfigure[]{%
    \centering
    \includegraphics[page=10,width=0.25\textwidth]{figures/biconn-deg3.pdf}
    \label{fig:whole-example-biconn-deg3-v9}
  }\hfil
  \subfigure[]{%
    \centering
    \includegraphics[page=11,width=0.25\textwidth]{figures/biconn-deg3.pdf}
    \label{fig:whole-example-biconn-deg3-v10}
  }
  \caption{The final steps of the algorithm in the proof of Theorem~\ref{th:biconn-deg3} for the construction of a turn-regular orthogonal drawing of a biconnected planar 3-graph (see also Fig.~\ref{fig:biconn-deg3}).}
  \label{fig:biconn-deg3-whole}
\end{figure}

\hamiltoniandegfour*

\begin{figure}[h!tb]
  \centering
  \subfigure[]{%
    \centering
    \includegraphics[page=1,width=0.15\textwidth]{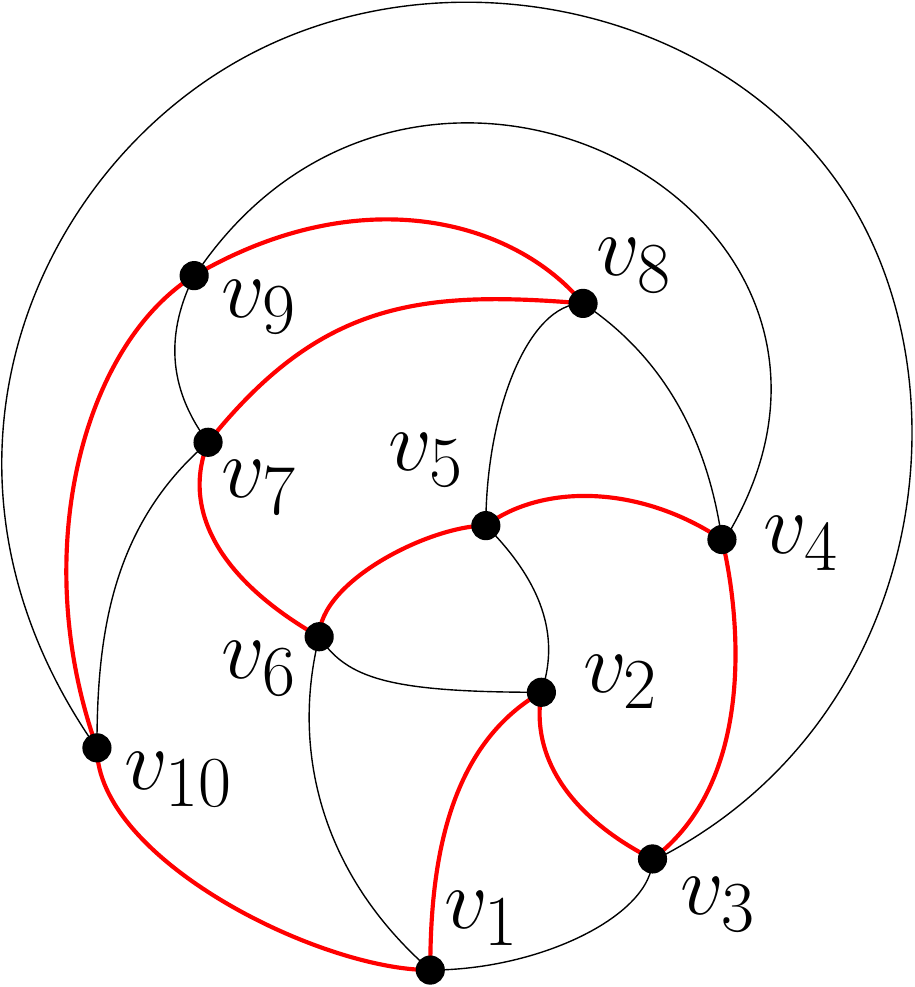}
    \label{fig:example-ham-deg4}
  }\hfil
  \subfigure[]{%
    \centering
    \includegraphics[page=2,width=0.15\textwidth]{figures/hamiltonian-deg4.pdf}
    \label{fig:example-ham-deg4-v1}
  }\hfil
  \subfigure[]{%
    \centering
    \includegraphics[page=3,width=0.20\textwidth]{figures/hamiltonian-deg4.pdf}
    \label{fig:example-ham-deg4-v2}
  }\hfil
  \subfigure[]{%
    \centering
    \includegraphics[page=4,width=0.20\textwidth]{figures/hamiltonian-deg4.pdf}
    \label{fig:example-ham-deg4-v3}
  }\hfil
  \subfigure[]{%
    \centering
    \includegraphics[page=5,width=0.20\textwidth]{figures/hamiltonian-deg4.pdf}
    \label{fig:whole-example-ham-deg4-v4}
  }\hfil
  \subfigure[]{%
    \centering
    \includegraphics[page=6,width=0.24\textwidth]{figures/hamiltonian-deg4.pdf}
    \label{fig:whole-example-ham-deg4-v5}
  }\hfil
  \subfigure[]{%
    \centering
    \includegraphics[page=7,width=0.24\textwidth]{figures/hamiltonian-deg4.pdf}
    \label{fig:whole-example-ham-deg4-v6}
  }\hfil
  \subfigure[]{%
    \centering
    \includegraphics[page=8,width=0.30\textwidth]{figures/hamiltonian-deg4.pdf}
    \label{fig:whole-example-ham-deg4-v7}
  }\hfil
  \subfigure[]{%
    \centering
    \includegraphics[page=9,width=0.30\textwidth]{figures/hamiltonian-deg4.pdf}
    \label{fig:whole-example-ham-deg4-v8}
  }\hfil
  \subfigure[]{%
    \centering
    \includegraphics[page=10,width=0.30\textwidth]{figures/hamiltonian-deg4.pdf}
    \label{fig:whole-example-ham-deg4-v9}
  }\hfil
  \subfigure[]{%
    \centering
    \includegraphics[page=11,width=0.29\textwidth]{figures/hamiltonian-deg4.pdf}
    \label{fig:whole-example-ham-deg4-v10}
  }
  \caption{The drawing produced by the algorithm of the proof of Theorem~\ref{th:hamiltonian-deg4} for the construction of a turn-regular orthogonal drawing of a Hamiltonian planar 4-graph. The Hamiltonian cycle is drawn red and thick.}
  \label{fig:hamiltonian-deg4-whole}
\end{figure}

\begin{proof}
We use an iterative construction inspired by the algorithm by Biedl and Kant~\cite{DBLP:journals/comgeo/BiedlK98} where we replace the $st$-ordering of the input graph with the ordering given by the Hamiltonian cycle.
Let $G$ be a Hamiltonian planar 4-graph and let $\mathcal{E}$ be a planar embedding of $G$. If $G$ has some vertices of degree less than four, choose one of these to be $v_1$. Otherwise, if $G$ is 4-regular, denote by $v_1$ a vertex of $G$ such that the two edges of the Hamiltonian path incident to $v_1$ are not both on the external face. Such a vertex always exists since if $G$ is 4-regular it cannot be also outerplanar, as an outerplanar graph has at least a vertex of degree two. We consider the vertices in the order $v_1, v_2, \dots, v_n$ given by the Hamiltonian cycle of~$G$. We also assume that edge $(v_n,v_1)$ is incident to the external face of $\mathcal{E}$ (otherwise we could change the external face preserving the rotation scheme of $\mathcal{E}$ and avoiding that $(v_1,v_2)$ is also on the external face). We incrementally construct an orthogonal drawing $\Gamma$ of $G$ by adding $v_k$, for $k = 1, \dots, n$, to the drawing $\Gamma_{k-1}$ of $\{v_1, \dots, v_{k-1}\}$, preserving the embedding of $\mathcal{E}$. In particular, vertex $v_k$ is always placed above~$\Gamma_{k-1}$.

After $v_k$ is added to $\Gamma_{k-1}$, we introduce some extra columns into $\Gamma_{k}$ so to maintain some invariants. Analogously to~\cite{DBLP:journals/comgeo/BiedlK98} and to the proof of Theorem~\ref{th:biconn-deg3}, we maintain the invariant that each edge $(v_i,v_j)$ such that $i \leq k < j$ has a dedicated column in $\Gamma_{k}$ that is reachable from $v_i$ with at most one bend without introducing crossings. Observe that, since the vertices are inserted in the order given by the Hamiltonian cycle, in the drawing $\Gamma_{k-1}$ a special path can be identified, that we call the \emph{spine}, connecting, for $i=1, \dots, k-2$, vertex $v_i$ to $v_{i+1}$. We maintain the invariant that all the reflex corners introduced in the drawing point down-left or up-left if they are contained into a face that is on the left side of the spine and point down-right or up-right if they are contained into a face that is contained on the right side of the spine, with the possible exception of the reflex corners on the external face occurring on edges incident to~$s$ or to~$t$.

Suppose $v_1$ has degree $4$. Then, its additional two edges may be: (a) both on the right side of the spine (Fig.~\ref{fig:ham-first-1}) or (b) one on the left side and the other on the right side of the spine (Fig.~\ref{fig:ham-first-2}). The case when the additional edges are both on the left side of the spine, depicted in Fig.~\ref{fig:ham-first-3}, is ruled out by the choice of $v_1$. In the cases (a) and (b) the drawing of $v_1$ and the columns reserved for its outgoing edges are depicted in Figs.~\ref{fig:ham-first-1} and \ref{fig:ham-first-2}, respectively. If, instead, $v_1$ has degree three or two, the drawing of $v_1$ is obtained from Fig.~\ref{fig:ham-first-2} by omitting the missing edges.

For $k=1, 2, \dots, n$, each vertex $v_k$ added to the drawing has one incoming and one outgoing edge of the spine. The drawing of $v_k$ follows simple rules that depend on the number of additional edges of $v_k$ on the left side and on the right side of the spine in~$\mathcal{E}$. 

Now consider a vertex $v_k$ with $1 < k < n$. Suppose $v_k$ has degree four and that its additional edges are both on the left side of the spine (Fig.~\ref{fig:ham-unbalanced-1}). Denote by $v_i$ and $v_j$ the other endpoints of the additional edges of $v_k$ and assume, without loss of generality, that $v_i < v_j$. There are three cases: $v_k < v_i < v_j$ (Fig.~\ref{fig:ham-unbalanced-2}), or $v_i < v_k < v_j$ (Fig.~\ref{fig:ham-unbalanced-3}), or $v_i < v_j < v_k$ (Fig.~\ref{fig:ham-unbalanced-4}). In all cases the drawings of $v_k$ depicted in Figs.~\ref{fig:ham-unbalanced-2}-\ref{fig:ham-unbalanced-4}) guarantee the invariants. 
If $v_k$ has degree two or three, its drawing is that of Fig.~\ref{fig:ham-unbalanced-3} where the missing edges are omitted.

\begin{figure}[!h]
  \centering
  \subfigure[]{%
    \centering
    \includegraphics[page=1,width=0.15\textwidth]{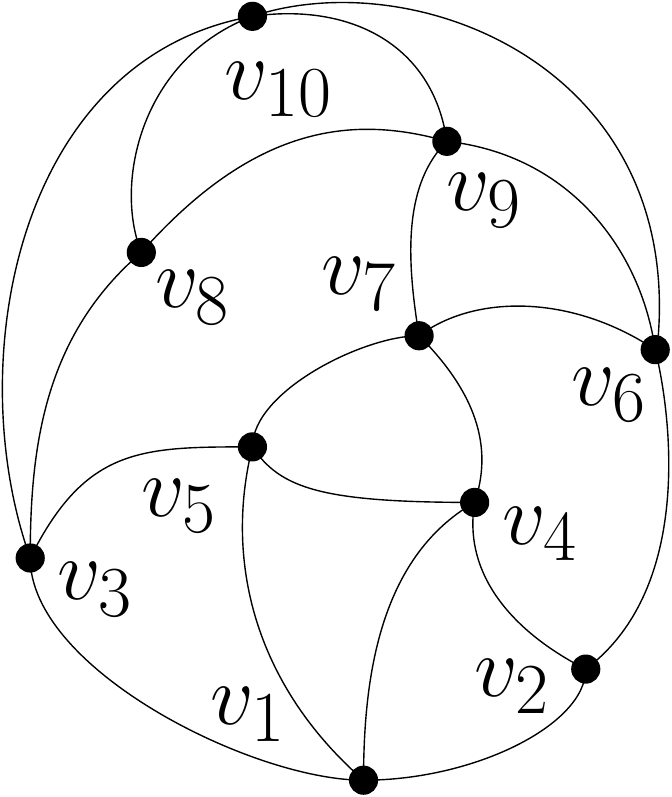}
    \label{fig:roll-whole-instance}
  }\hfil
  \subfigure[]{%
    \centering
    \includegraphics[page=2,width=0.05\textwidth]{figures/roll-whole.pdf}
    \label{fig:roll-whole-v1}
  }\hfil
  \subfigure[]{%
    \centering
    \includegraphics[page=3,width=0.1\textwidth]{figures/roll-whole.pdf}
    \label{fig:roll-whole-v2}
  }\hfil
  \subfigure[]{%
    \centering
    \includegraphics[page=4,width=0.15\textwidth]{figures/roll-whole.pdf}
    \label{fig:roll-whole-v3}
  }\hfil
  \subfigure[]{%
    \centering
    \includegraphics[page=5,width=0.15\textwidth]{figures/roll-whole.pdf}
    \label{fig:roll-whole-v4}
  }\hfil
  \subfigure[]{%
    \centering
    \includegraphics[page=6,width=0.24\textwidth]{figures/roll-whole.pdf}
    \label{fig:roll-whole-v5}
  }\hfil
  \subfigure[]{%
    \centering
    \includegraphics[page=7,width=0.27\textwidth]{figures/roll-whole.pdf}
    \label{fig:roll-whole-v6}
  }\hfil
  \subfigure[]{%
    \centering
    \includegraphics[page=8,width=0.33\textwidth]{figures/roll-whole.pdf}
    \label{fig:roll-whole-v7}
  }\hfil
  \subfigure[]{%
    \centering
    \includegraphics[page=9,width=0.33\textwidth]{figures/roll-whole.pdf}
    \label{fig:roll-whole-v8}
  }\hfil
  \subfigure[]{%
    \centering
    \includegraphics[page=10,width=0.39\textwidth]{figures/roll-whole.pdf}
    \label{fig:roll-whole-v9}
  }\hfil
  \subfigure[]{%
    \centering
    \includegraphics[page=11,width=0.39\textwidth]{figures/roll-whole.pdf}
    \label{fig:roll-whole-v10}
  }
  \caption{Example illustrating the algorithm in the proof of Theorem~\ref{th:biconnected-internally} for the construction of an orthogonal drawing with turn-regular internal faces.}
  \label{fig:roll-whole}
\end{figure}

The case when $v_k$ has degree four and its additional edges are both on the right of the spine can be handled analogously, horizontally mirroring the configurations of Figs.~\ref{fig:ham-unbalanced-2}-\ref{fig:ham-unbalanced-4}. 

Suppose now that $v_k$ has degree four and that its additional edges are one on the left side and one on the right side of the spine (Fig.~\ref{fig:ham-balanced-1}). Again, depending on whether the vertices adjacent to $v_k$ precede or follow $v_k$ in the ordering given by the Hamiltonian cycle we can adopt for $v_k$ one of the drawings depicted in Figs.~\ref{fig:ham-balanced-2}-\ref{fig:ham-balanced-4}.  

Finally, vertex $v_n$ is added to the drawing by using one of the configurations shown in Figs.~\ref{fig:ham-last-1}, \ref{fig:ham-last-2}, and \ref{fig:ham-last-3}.

Since the configurations in Figs~\ref{fig:ham-first-1}-\ref{fig:ham-balanced-4} all respect the invariant that in the faces on the left side (right side, resp.) of the spine only top-left and bottom-left (top-right and bottom-right, resp.) reflex corners are inserted, the drawing cannot have kitty corners and is turn regular. Also observe that each edge has a maximum of three bends per edge.
\end{proof}



\section{Full Proofs for Section~\ref*{sec:testing-turn-regularity}}
\label{app:trees}
In this section, we provide the detailed proofs of statements omitted in Section~\ref{sec:testing-turn-regularity} due to space constraints.

\rectilineartrees*

\begin{proof}
Suppose that $T$ is turn-regular, and let $H$ be a turn-regular representation of $T$ (the other direction is obvious). Consider the rectilinear image $\rect{H}$ of $H$. We show that $\rect{H}$ can be transformed into a turn-regular representation $\rect{H'}$ with only flat corners at degree-2 vertices. Let $u$ and $v$ be two vertices of $\rect{H}$ such that $\deg(u) \neq 2$, $\deg(v) \neq 2$, and the path $\pi_{uv}$ connecting $u$ to $v$ in $\rect{H}$ has only (possibly none) degree-2 vertices. W.l.o.g., we assume that $\pi_{uv}$ does not contain two consecutive corners $c_1$ and $c_2$ such that $\turn(c_1)=1$ and $\turn(c_2)=-1$ (i.e., corners that form a zig-zag pattern), because in this case we can replace both $c_1$ and $c_2$ with two flat corners. Hence, we can assume that all non-flat corners encountered along $\pi_{uv}$, while moving from $u$ to $v$, always turn in the same direction, say to the right; i.e., all of them are convex corners. Let~$k$ be the number of convex corners along $\pi_{uv}$. Since the only face of $H$ is the external face, by Corollary~\ref{cor:outer-face-three-convex} we may assume that  $1 \leq k \leq 2$, which yields two cases. 

Assume first $k=1$. Let $c$ be the convex corner along $\pi_{uv}$. W.l.o.g., suppose that $c$ points up-left. Since $\rect{H}$ is turn-regular, by Corollary~\ref{cor:outer-face-three-convex}, either $u$ has no edge segment incident from the right (see Fig.~\ref{fig:rectilinear-trees-1-a}) or $v$ has no edge segment incident from below (see Fig.~\ref{fig:rectilinear-trees-1-b}). Hence, we can transform $c$ into a flat corner by using one of these two directions to reroute $\pi_{uv}$ around $u$ or around $v$.


Let now $k=2$. Let $c_1$ and $c_2$ be the two convex corners along $\pi_{uv}$. W.l.o.g., suppose that $c_1$ points up-left and $c_2$ points up-right (see Fig.~\ref{fig:rectilinear-trees-2}). Since $\rect{H}$ is turn-regular, by Corollary~\ref{cor:outer-face-three-convex}, $u$ has no edge segment incident from the right and $v$ has no edge segment incident from the left. Hence, again, we can transform $c$ into a flat corner by using these two directions to reroute $\pi_{uv}$ around $u$ and $v$.    	

$\rect{H'}$ is obtained by applying the above transformation to each pair of vertices $u$ and $v$ that have the properties above. 
\end{proof}

\hereditary*

\begin{proof}
Let $H$ be a turn-regular rectilinear representation of tree $T$, and let $T'$ be a subtree of $T$, i.e., $T'$ is a connected subgraph of $T$. If $T$ and $T'$ consist of $n$ and $n'$ vertices, respectively, then $T'$ can be obtained from $T$ by repeatedly removing exactly one leaf of it, $n-n'$ times. For $i=0,1\ldots,n-n'$, denote by $T_i$ the subtree of $T$ that is derived after the removal of the first $i$ leaves, and by $H_i$ the restriction of $H$ to tree $T_i$; clearly, $T = T_0$, $H = H_0$ and $T' = T_{n-n'}$ hold. For $i=1\ldots,n-n'$, we will prove that $H_i$ is turn-regular, under the assumption that $H_{i-1}$ is turn-regular. This will imply that $T'$ is turn-regular, as desired. Let $u$ be the leaf that is removed from $H_{i-1}$ to obtain $H_i$. We emphasize that $H_{i-1}$ consists of a single face, i.e., the external. Hence, the removal of $u$ from $H_{i-1}$ implies the removal of a pair of reflex corners from its external face. 
We next focus on the case, in which the removal of $u$ from $H_{i-1}$ introduces a reflex corner in $H_i$ that is not present in $H_{i-1}$, as otherwise $H_i$ is clearly turn-regular. 

Let $v$ be the (unique) neighbor of $u$ in $T_{i-1}$. If $v$ has no neighbor in $T_{i-1}$ other than $u$, then $i=n-1$ holds, i.e., $T_i$ consists of single vertex $v$ and thus is turn-regular. Hence, we may assume that $v$ has a neighbor, say $w$, in $T_{i-1}$ that is different from $v$. W.l.o.g., we assume that $(v,w)$ is vertical in $H_{i-1}$ (and thus in $H_i$) with $v$ being its top end-vertex. If $\deg(v)=4$ in $T_{i-1}$, then the removal of $u$ from $T_{i-1}$ yields a flat corner in $H_i$, which implies $H_i$ does not contain a reflex corner that does not exist in $H_{i-1}$; a contradiction. Hence, $\deg(v)\in\{2,3\}$ in $T_{i-1}$. We will focus on the case, in which $\deg(v) = 2$ in $T_{i-1}$; the case, in which $\deg(v) = 3$ in $T_{i-1}$, is analogous. Note that if $(u,v)$ is horizontal in $H_{i-1}$, then $(u,v)$ and $(v,w)$ inevitably form a reflex corner $\zeta_v$ in $H_{i-1}$.

Assume for a contradiction that $H_i$ is not turn-regular. If we denote by $\langle c_v, c_v' \rangle$ the ordered pair of reflex corners at $v$ in $H_i$, then at least one of $c_v$ and $c_v'$, say the former, forms a pair of kitty corners with another corner $c$ of $H_i$ (and thus of $H_{i-1}$). Denote by $\langle c_u, c_u' \rangle$ the ordered pair of reflex corners at $u$ in $H_{i-1}$. If $(u,v)$ is horizontal and $u$ is to the left (right) of $v$ in $H_{i-1}$, then we observe that $c$ and $c_u'$ ($c$ and $\zeta_v$, respectively) form a pair of kitty corners in $H_{i-1}$. On the other hand, if $(u,v)$ is vertical, then $c$ and $c_u$ form a pair of kitty corners in $H_{i-1}$. In both cases, we obtain a contradiction to the fact that $H_{i-1}$ is turn-regular.
\end{proof}

\treesforks*

\begin{proof}
To prove the lemma, we first state and formally prove two claims.

\begin{clm}\label{clm:rot-leaves}
Let $H$ be a turn-regular rectilinear representation of a tree $T$. Let $u$ and $v$ be two leaves associated with two ordered reflex corner-pairs $\langle c_u,c_u'\rangle$ and $\langle c_v,c_v'\rangle$. If in a traversal of the external face of $T$ from $u$ to $v$ there is no other leaf of $T$, then $\rot(c_u',c_v)\in\{-1,0,1\}$ (or equivalently $\rot(c_u,c_v)\in\{0,-1,-2\}$).
\end{clm}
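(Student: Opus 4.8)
The plan is to bound $\rot(c_u',c_v)$ separately from below and from above, using the fact that $T$ is a tree and $H$ is turn-regular. Since a tree has a single face, both $c_u'$ and $c_v$ lie on the external face and the whole traversal from $c_u'$ to $c_v$ stays on it. The decisive structural input is Property~\ref{prp:reflex-leaves}: every reflex corner of $H$ is contributed by a leaf. By hypothesis no leaf other than $u$ and $v$ occurs when moving from $u$ to $v$, so every corner strictly between $c_u'$ and $c_v$ sits at a vertex of degree $3$ or $4$ and is therefore convex or flat, i.e.\ has $\turn\ge 0$.

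For the lower bound I would expand $\rot(c_u',c_v)$ as $\turn(c_u')$ plus the turns of the corners strictly between $c_u'$ and $c_v$. The first summand equals $-1$ because $c_u'$ is reflex, while each remaining summand is nonnegative by the observation above; hence $\rot(c_u',c_v)\ge -1$. For the upper bound I would argue by contradiction using Lemma~\ref{lem:outer-face-kitty-corners}. Assuming $\rot(c_u',c_v)\ge 2$, the reflex corner $c_u'$ (as $c_1$) together with $c_v$ (as $c_2$) satisfies the hypothesis ``$c_1$ is reflex and $\rot(c_1,c_2)\ge 2$'' of that lemma, so the external face would contain a pair of kitty corners, contradicting turn-regularity of $H$. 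Thus $\rot(c_u',c_v)\le 1$, and the two bounds give $\rot(c_u',c_v)\in\{-1,0,1\}$.

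To derive the equivalent formulation I would invoke Property~\ref{pr:rot-2} to split $\rot(c_u,c_v)=\rot(c_u,c_u')+\rot(c_u',c_v)$, observe that $\rot(c_u,c_u')=\turn(c_u)=-1$ since $c_u'$ immediately follows $c_u$ in the ordered pair, and substitute the range just obtained to get $\rot(c_u,c_v)\in\{-2,-1,0\}$.

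I do not expect a genuine obstacle here; the argument is essentially a direct application of the external-face kitty-corner lemma once the corners between the two leaves have been identified as non-reflex. The only points demanding care are checking that the hypotheses of Lemma~\ref{lem:outer-face-kitty-corners} are instantiated correctly---in particular that it is $c_u'$ rather than $c_u$ that plays the role of the reflex corner $c_1$---and confirming via Property~\ref{prp:reflex-leaves} that no reflex corner can hide between the two consecutive leaves, which is exactly what makes the lower bound immediate.
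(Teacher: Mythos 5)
Your proposal is correct and follows essentially the same route as the paper's proof: Property~\ref{prp:reflex-leaves} rules out reflex corners strictly between $c_u'$ and $c_v$ (giving the lower bound $\rot(c_u',c_v)\ge -1$), and Lemma~\ref{lem:outer-face-kitty-corners} with $c_u'$ as the reflex corner $c_1$ forces $\rot(c_u',c_v)<2$. Your explicit derivation of the equivalent formulation via Property~\ref{pr:rot-2} and $\rot(c_u,c_u')=\turn(c_u)=-1$ is a small addition the paper leaves implicit, but it does not change the argument.
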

\begin{proof}
By Property~\ref{prp:reflex-leaves}, it follows that there exist no reflex corners between $c_u'$ and $c_v$. By Lemma~\ref{lem:outer-face-kitty-corners}, it follows that  $\rot(c_u',c_v)<2$, which implies there exist at most two convex corners between $c_u'$ and $c_v$. Hence, $\rot(c_u',c_v)\in\{-1,0,1\}$.
\end{proof}

\noindent Consider a leaf $u$ of $T$ and assume w.l.o.g.\ that $u$ is pointing \lupward in $H$. Let $v$ be the leaf that follows $u$ in the traversal of the external face of $T$ in $H$. Let also $\langle c_u,c_u'\rangle$ and $\langle c_v,c_v'\rangle$ be the two pairs of ordered reflex corners associated with $u$ and $v$, respectively. By Claim~\ref{clm:rot-leaves}, $v$ does not point \lleftward, as otherwise $\rot(c_u,c_v)=-3$ or $\rot(c_u,c_v)=1$. If $v$ is pointing \lupward (i.e., $\rot(c_u,c_v)=0$), then we say that there exists no \emph{change in direction} between $u$ and $v$. Otherwise, we say that there exists a change in direction between $u$ and $v$, which implies $\rot(c_u,c_v)<0$. 

\begin{clm}\label{clm:not-too-many-directions}
Let $H$ be a turn-regular rectilinear representation of a tree $T$. Then, the total number of changes in direction of the leaves of $T$ in $H$ is at most four.
\end{clm}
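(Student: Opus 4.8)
The plan is to reduce the statement to a short counting argument over the cyclic sequence of leaves along the external face. First I would fix the cyclic order $u_1, u_2, \dots, u_L$ in which the leaves of $T$ are encountered while traversing the external face of $H$, and for each $u_i$ denote by $\langle c_{u_i}, c_{u_i}'\rangle$ its ordered pair of reflex corners (which exists by Property~\ref{prp:reflex-leaves}, since in a tree without degree-$2$ vertices only leaves produce reflex corners, so between two consecutive leaves the boundary carries no reflex corner). By Claim~\ref{clm:rot-leaves}, every consecutive pair then satisfies $\rot(c_{u_i}, c_{u_{i+1}}) \in \{0,-1,-2\}$, with indices taken cyclically so that $u_{L+1}=u_1$.

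Next I would pin down the link between these rotation values and the notion of a change in direction. The key observation is that the direction in which a leaf points is determined by the heading of the boundary traversal at that leaf, and that this heading advances by exactly $\rot(c_{u_i}, c_{u_{i+1}})$ quarter-turns from one leaf to the next, i.e.\ $H_{u_{i+1}} = H_{u_i} + \rot(c_{u_i}, c_{u_{i+1}}) \pmod 4$. Consequently $\rot(c_{u_i}, c_{u_{i+1}}) = 0$ holds if and only if $u_i$ and $u_{i+1}$ point in the same direction, i.e.\ there is no change in direction; and since the three admissible values $0,-1,-2$ are pairwise distinct modulo $4$, a nonzero value forces distinct headings, hence a change. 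Therefore the number of changes in direction equals the number of indices $i$ with $\rot(c_{u_i}, c_{u_{i+1}}) \neq 0$.

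Then I would compute the total rotation. By additivity (Property~\ref{pr:rot-2}), the cyclic sum $\sum_{i} \rot(c_{u_i}, c_{u_{i+1}})$ telescopes to $\rot(c_{u_1}, c_{u_1})$, which equals $-4$ because the external face satisfies Property~\ref{pr:rot-1}. Writing $a$ for the number of consecutive pairs with rotation $-1$ and $b$ for the number with rotation $-2$ (all remaining pairs contributing $0$), this yields $-a - 2b = -4$, that is $a + 2b = 4$. The number of changes in direction is then exactly $a + b = (a + 2b) - b = 4 - b \le 4$ since $b \ge 0$, which is the claim.

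The only genuinely delicate step—and the one I expect to require the most care—is the correspondence asserted in the second paragraph: that $\rot(c_{u_i}, c_{u_{i+1}})$ really tracks the change of the pointing direction once the full $360^\circ$ turn taken at each leaf is accounted for. This amounts to unwinding the definitions of $\turn$ and $\rot$, using that the pair of reflex corners at each leaf contributes a net turn of $-2$, to verify that the heading propagated from $u_i$ through the reflex-free sub-path to $u_{i+1}$ reproduces the pointing direction of $u_{i+1}$; a convenient cross-check is that it reproduces the case analysis preceding the claim (an \lupward leaf can be followed only by an \lupward, \lrightward, or \ldownward leaf, never a \lleftward one). Everything beyond this is routine bookkeeping with Properties~\ref{pr:rot-1} and~\ref{pr:rot-2}.
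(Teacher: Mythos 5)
Your proof is correct and follows essentially the same route as the paper's: both rest on the fact that the rotation around the external face is $-4$ (Property~\ref{pr:rot-1}), that rotations between consecutive leaves telescope, and that by Claim~\ref{clm:rot-leaves} each change in direction contributes $-1$ or $-2$ while non-changes contribute $0$. The only difference is presentational: the paper assumes five or more changes and derives the contradiction $\rot(u,u)\le -5$, whereas you solve $a+2b=4$ directly (and your second paragraph merely re-derives the correspondence between $\rot$ values and pointing directions that the paper fixes in the definition of ``change in direction'' just before the claim).
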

\begin{proof}
Assume to the contrary that 
there exist $s \geq 5$ pairs of consecutive leaves $\langle v_1,v_2\rangle,\ldots,\langle v_{2s-1},v_{2s} \rangle$ in a traversal of the external face of $T$, where a change in direction occurs. Note that $v_1,v_2,\ldots,v_{2s-1},v_{2s}$ are not necessarily distinct. For $i=1,\ldots,2s$, let $\langle c_i,c_i'\rangle$ be the ordered  pair of corners associated with $v_i$. Since $\langle v_{2i-1},v_{2i}\rangle$ introduces a change in direction, $\rot(c_{2i-1},c_{2i})\leq-1$. Summing up over $i$, we obtain $\sum_{i=1}^{s} \rot(c_{2i-1},c_{2i}) \leq -5$. Since the remaining leaves of $T$ do not introduce a change in direction,  $\rot(u,u) = \sum_{i=1}^{s} \rot(c_{2i-1},c_{2i})$ holds for any leaf $u$ in $H$, which is a contradiction to Property~\ref{pr:rot-1}.
\end{proof}
Since $2$-~and~$3$-forks require one and two changes in direction, the proof of lemma follows directly from Claim~\ref{clm:not-too-many-directions}.
\end{proof}

%
%

\lemonesplitter*

\begin{proof}
Clearly, every $3$- and $4$-windmill is turn-regular; this can be easily seen looking at the illustrations in Figs.~\ref{fig:3-windmill} and~\ref{fig:4-windmill}, where no pairs of kitty corners are present. For the other direction, consider a turn-regular tree $T$ and let $u$ be the unique splitter of $T$. By definition, $u$ has either four or three neighbors that are not leaves. In the former case, we will prove that $T$ is a $4$-windmill, while in the latter case that $T$ is a $3$-windmill. Note that $u$ may be adjacent to a leaf, only if $\deg(u)=4$. 

We start with the case in which $u$ has four neighbors that are not leaves, which implies that $u$ is the root of exactly four non-trivial subtrees $T_1,\ldots,T_4$. Since $u$ is the only splitter in $T$, by Lemma~\ref{lem:0-splitters}, it follows that each of $T_1,\ldots,T_4$ is a $4$-caterpillar. To show that $T$ is a $4$-windmill, it remains to show that each of $T_1,\ldots,T_4$ cannot contain a degree-$4$ vertex. Assume to the contrary that $T_1$ contains a degree-$4$ vertex $v$. Since the $\deg(u)=1$ in $T_1$, $u \neq v$ holds. We root $T$ at $u$ and we proceed by pruning $T$ at $v$, which will result in a subtree $T'$ of $T$ that contains a $3$-fork at $v$. Since $T$ is turn-regular, by Lemma~\ref{lem:subtrees}, $T'$ is also turn-regular. Since $T_2,\ldots,T_4$ are non-trivial, by Lemma~\ref{lem:nontrivial-subtree} each of them contains a fork. By Lemma~\ref{lem:trees-fork}, these three forks together with the $3$-fork formed at $v$ contradict the fact that $T'$ is turn-regular.

We now consider the case in which $u$ has three neighbors that are not leaves, that is, $u$ is the root of exactly three non-trivial subtrees $T_1$, $T_2$ and $T_3$. Since $u$ is the only splitter in $T$, we have again that each of these subtrees is a $4$-caterpillar. We now claim that two of them must be $3$-caterpillars, which also implies that $T$ is a $3$-windmill, as desired. Assume to the contrary that $T_1$ and $T_2$ are not $3$- but $4$-caterpillars, that is, there exist vertices $v_1$ and $v_2$ in $T_1$ and $T_2$ that are of degree~$4$, respectively. Note that $u \neq v_1$ and $u \neq v_2$. We assume that $T$ is rooted at $u$ and, as in the previous case, we prune $T$ first at $v_1$ and then at $v_2$. The resulting subtree $T'$ contains two $3$-forks at $v_1$ and $v_2$ and one additional fork that is contained in $T_3$ (by Lemma~\ref{lem:nontrivial-subtree}). Hence, by  Lemma~\ref{lem:trees-fork}, $T'$ is not turn-regular, which is a contradiction to Lemma~\ref{lem:nontrivial-subtree} (since $T'$ is a subtree of the turn-regular tree $T$).
\end{proof}

\lemtwosplitters*

\begin{proof}
Clearly, every double-windmill is turn-regular, as it can be easily seen looking at the illustration in Fig.~\ref{fig:double-windmill}. 
Now consider a turn-regular tree $T$ and let $u$ and $v$ be the two splitters of $T$. 

To prove that $T$ has Property~(\ref{p:1}) of a double-windmill, consider the path $P$ from $u$ to $v$ in $T$, and let $w$ be an internal vertex of $P$ (if any). Since $w$ is an internal vertex of $P$, it is adjacent to two non-leaves of $T$ (i.e., its neighbors in $P$). 
Note that $w$ cannot be a splitter, because $T$ has exactly two splitters (that is, $u$ and $v$). Hence, the neighbors of $w$ that are not in $P$ are leaves, which implies that Property~(\ref{p:1}) of a double-windmill holds for $T$.

We now prove that $T$ has Property~(\ref{p:2}) of a double-windmill. Assume to the contrary that $u$ is the root of four non-trivial subtrees. We root $T$ at $u$ and we denote by $T_1,\ldots,T_4$ the subtrees of $T$ that are rooted at the four children of $u$. It follows that $T_1,\ldots,T_4$ are non-trivial and disjoint. W.l.o.g., assume that $T_1$ contains $v$. Since $v$ is a splitter, there exist at least two non-trivial subtrees of $T_1$, say $T_1^1$ and $T_1^2$, that are rooted at two children of $v$. Hence, $T_1^1$, $T_1^2$, $T_2$, $T_3$ and $T_4$ are disjoint subtrees of $T$. Since each of these subtrees is non-trivial, by Corollary~\ref{cor:atmost4subtrees} we have a contradiction to the turn-regularity of $T$. Hence, Property~(\ref{p:2}) of a double-windmill holds for $T$.

By Property~(\ref{p:2}), $u$ has three non-leaf neighbors $u_1,\ldots,u_3$ in $T$. We again root $T$ at $u$ and we denote by $T_1,\ldots,T_3$ the subtrees of $T$ that are rooted at $u_1,\ldots,u_3$. It follows that $T_1,\ldots,T_3$ are non-trivial and disjoint. As above, we assume w.l.o.g.\ that $T_1$ contains $v$, and we define in the same way the two subtrees, $T_1^1$ and $T_1^2$, of $T_1$. We now claim that none of $T_1^1$, $T_1^2$, $T_2$ and $T_3$ contains a degree-$4$ vertex $z$. Assume to the contrary that one, say $T_3$, contains a degree-$4$ vertex $z$. Since we have assumed $T$ to be rooted at $u$, we proceed by pruning $T$ at $z$. The resulting tree $T'$, which is turn-regular by Lemma~\ref{lem:subtrees}, contains a $3$-fork at $z$ and three additional forks that lie in the non-trivial trees $T_1^1$, $T_1^2$ and $T_2$, which contradicts Corollary~\ref{cor:atmost4subtrees}. Hence, our claim follows. In particular, since $u$ and $v$ are the only splitters in $T$, our claim implies that each of $T_2$ and $T_3$ together with $u$, as well as, each of $T_1^1$ and $T_1^2$ together with $v$ is a $3$-caterpillar, which implies that Property~(\ref{p:3}) of a double-windmill holds for $T$.
\end{proof}

\windmills*

\begin{proof}
Since by Lemma~\ref{lem:max-splitters} a turn-regular tree has at most two splitters, the correctness of our characterization follows from Lemmas~\ref{lem:0-splitters}, \ref{lem:1-splitter} and~\ref{lem:2-splitters} (recall that we have assumed w.l.o.g.\ that $T$ does not have degree-2 vertices). For the recognition, we first count how many splitter tree $T$ contains, which can be done in $O(n)$ time. If there are more than two splitters, we reject the instance. If there are no splitters, we accept the instance and we report the representation described in Lemma~\ref{lem:0-splitters}. For the remaining cases, we first observe that one can trivially test whether a (sub-)tree is a $3$- or a $4$-caterpillar in time linear to its number of vertices. This observation directly implies that in linear time one can test whether $T$ is turn-regular, when $T$ has exactly one splitter. It remains to consider the case in which $T$ contains exactly two splitters $u$ and $v$. It follows that each internal vertex of the path from $u$ to $v$ is adjacent only to leaves of $T$. We now argue for vertex $u$; symmetric arguments hold for $v$. Two of the subtrees of $T$ rooted at $u$ that do not contain $v$ have to be $3$-caterpillars, while the third (if any) has to be a leaf. Both can be checked in time linear in the size of $T$, which completes the description of the proof.
\end{proof}


\fi 

\end{document}